

 \documentclass[final,3p,12pt]{elsarticle}


\usepackage[para,online,flushleft]{threeparttable}
\usepackage[pdf]{pstricks}

\usepackage{amssymb}




 \biboptions{sort&compress}


\usepackage{array}
\usepackage{wrapfig}
\usepackage{multirow}
\usepackage{amsmath}
\usepackage{amssymb,amsthm}
\usepackage{graphicx}
\usepackage{textcomp}
\usepackage{wrapfig}
\usepackage{color}
\usepackage{bm}
\usepackage{accents}
\usepackage{cuted}
\usepackage{caption}
\usepackage{pgfplots}
\usepackage{mathrsfs}
\usepackage{setspace,wasysym}
\usepackage{lscape}
\usepackage{algpseudocode}
\usepackage[ruled,vlined,commentsnumbered,linesnumbered]{algorithm2e}
\usetikzlibrary{patterns}

\usepackage{booktabs}
\usepackage{rotating}

\newtheorem{prop}{Proposition}

\usepackage{subcaption}
\usepackage[outline]{contour}
\contourlength{1.0pt}

\usepackage{tikz}
\usetikzlibrary{matrix} 
\usetikzlibrary{arrows} 
\usetikzlibrary{arrows.meta}
\usetikzlibrary{calc} 
\usetikzlibrary{shapes}
\usetikzlibrary{fit}
\usetikzlibrary{positioning}
\usetikzlibrary{intersections,patterns,pgfplots.fillbetween}
\usetikzlibrary{calc}
\usepackage{epstopdf}
\epstopdfsetup{update} 
\usepackage[normalem]{ulem}

\tikzstyle{block} = [draw,rectangle,thick,minimum height=2em,minimum width=2em]
\tikzstyle{sum} = [draw,circle,inner sep=0mm,minimum size=2mm]
\tikzstyle{connector} = [->,thick]
\tikzstyle{line} = [thick]
\tikzstyle{branch} = [circle,inner sep=0pt,minimum size=1mm,fill=black,draw=black]
\tikzstyle{guide} = []

\makeatletter
\newcommand{\algorithmfootnote}[2][\footnotesize]{%
  \let\old@algocf@finish\@algocf@finish
  \def\@algocf@finish{\old@algocf@finish
    \leavevmode\rlap{\begin{minipage}{\linewidth}
    #1#2
    \end{minipage}}%
  }%
}
\makeatother

\DeclareMathAlphabet{\mathcal}{OMS}{cmsy}{m}{n} 
\renewcommand{\epsilon}{\text{\usefont{OML}{cmr}{m}{n}\symbol{15}}}

\newcommand{\transpose}{^{\mathrm{T}}}
\newcommand{\superscript}[1]{\ensuremath{^{\textrm{#1}}}}
\newcommand{\subscript}[1]{\ensuremath{_{\textrm{#1}}}}
\newcommand{\Prob}[1]{\mathbb{P}\!\left({#1}\right)}



\newcommand{\Cm}{{\mathbf C}}

\newcommand{\cb}{c\subscript{b}}

\newcommand{\D}{\boldsymbol {\mathfrak{D}}}

\newcommand{\dd}{\mathbf{d}}

\newcommand{\Evd}{\mathcal{E}}
\newcommand{\Exp}{\mathbb{E}}
\newcommand{\errorcomponent}{\epsilon} 







\newcommand{\kpre}{k\subscript{pre}}
\newcommand{\keq}{k\subscript{eq}}
\newcommand{\kpost}{k\subscript{post}}
\newcommand{\Km}{{\mathbf K}}


\newcommand{\Le}{{\mathcal{L}}}

\newcommand{\Mm}{{\mathbf M}}
\newcommand{\M}{{ \mathcal{M}}}
\newcommand{\Mk}{{\mathcal{M}}_{{k}}}


\newcommand{\Nm}{n_{\M}}

\newcommand{\nst}{n\subscript{st}}
\newcommand{\nstk}[1]{n_{\textrm{st},{#1}}}
\newcommand{\npow}{n\subscript{pow}}
\newcommand{\nth}{n_{\thetaa}} 



\newcommand{\pdf}{\mathrm{p}}
\newcommand{\pdfis}{\mathrm{q}}


\newcommand{\Qy}{Q\subscript{y}}


\newcommand{\rr}{{\mathbf r}}

\newcommand{\rk}{r\subscript{k}}




\newcommand{\ub}{u\subscript{b}}
\newcommand{\uu}{{\mathbf u }}


\newcommand{\Vt}{V_\mathrm{tot}}


\newcommand{\x}{{\mathbf x }}

\newcommand{\y}{{\mathbf y}}

\newcommand{\z}{{\mathbf z}}



\newcommand{\Sigm}{\boldsymbol{\Sigma}}

\newcommand{\Thetaa}{\boldsymbol{\Theta}}
\newcommand{\thetaa}{\boldsymbol{\theta}}
\newcommand{\zetaeq}{\zeta\subscript{eq}}

\newcommand{\subb}{\subscript{b}}
\newcommand{\subg}{\subscript{g}}
\newcommand{\subs}{\subscript{s}}

\newcommand{\ubd}{{\dot u}\subscript{b}}
\newcommand{\rd}{r_{\mathrm{d}}}

\newcommand{\ie}{{\em i.e.}, }
\newcommand{\eg}{{\em e.g.}, }
\newcommand{\fref}[1]{Figure~\ref{#1}}


\newcommand{\Tstrut}{\rule{0pt}{2.4ex}}       
\newcommand{\Bstrut}{\rule[-1.0ex]{0pt}{0pt}} 
\newcommand{\HeaderTstrut}{\rule{0pt}{3.0ex}}       
\newcommand{\HeaderBstrut}{\rule[-1.6ex]{0pt}{0pt}} 

\makeatletter
\tikzset{
  use path for main/.code={%
    \tikz@addmode{%
      \expandafter\pgfsyssoftpath@setcurrentpath\csname tikz@intersect@path@name@#1\endcsname
    }%
  },
  use path for actions/.code={%
    \expandafter\def\expandafter\tikz@preactions\expandafter{\tikz@preactions\expandafter\let\expandafter\tikz@actions@path\csname tikz@intersect@path@name@#1\endcsname}%
  },
  use path/.style={%
    use path for main=#1,
    use path for actions=#1,
  }
}
\makeatother
\newif\ifdarkmode
\darkmodefalse
\ifdarkmode
	\definecolor{commentcolor}{rgb}{0,1,0}
	\definecolor{sourcecolor}{rgb}{1,.5,1}
	\pagecolor{black}\color{white} 
	\definecolor{newblue}{rgb}{0.7,0.7,1}
\else
	\definecolor{commentcolor}{rgb}{1,0,1}
	\definecolor{sourcecolor}{rgb}{0,.5,0}
	\definecolor{newblue}{rgb}{0,0,1}
\fi
\setlength{\marginparwidth}{1.4in}\setlength{\marginparsep}{0.1in}
\providecommand{\ouremptyset}{\varnothing} 
\makeatletter\newcommand{\raisemath}[1]{\mathpalette{\raisem@th{#1}}}\newcommand{\raisem@th}[3]{\raisebox{#1}{$#2#3$}}\makeatother 
\providecommand{\chisub}[1]{\chi_{\raisemath{-2pt}{#1}}} 
\providecommand{\widehatchisub}[1]{\widehat\chi_{#1}} 
\providecommand{\chii}{\chisub{i}}
\providecommand{\units}[1]{\,\mbox{#1}}

\begin{document}

\begin{frontmatter}

\title{Likelihood Level Adapted Estimation of Marginal Likelihood for Bayesian Model Selection} 



\address[label1]{Department of Mechanical Engineering, Northern Arizona University, Flagstaff, AZ 86011, USA}

\author[label1]{Subhayan De\corref{cor1}} 
\ead{Subhayan.De@nau.edu} 
\cortext[cor1]{Corresponding author} 

\author[label4]{Reza Farzad} \ead{rfarzad@nd.edu}
\address[label4]{Department of Civil and Environmental Engineering \& Earth Sciences, University of Notre Dame, IN 46556, USA}
\address[label2]{Sonny Astani Department of Civil and Environmental Engineering, University of Southern California, Los Angeles, CA 90089, USA}

\author[label4]{Patrick T. Brewick} \ead{pbrewick@nd.edu}


\author[label2]{Erik A. Johnson}
\ead{JohnsonE@usc.edu}

\author[label3]{Steven F. Wojtkiewicz}
\address[label3]{Department of Civil and Environmental Engineering, Clarkson University, Potsdam, NY 13699, USA}
\ead{swojtkie@clarkson.edu}

\begin{abstract}

In computational mechanics, multiple models are often present to describe a physical system. 
While Bayesian model selection is a helpful tool to compare these models using measurement data, it requires the computationally expensive estimation of a multidimensional integral --- known as the marginal likelihood or as the model evidence (\ie the probability of observing the measured data given the model) --- over the multidimensional parameter domain.
This study presents efficient approaches 
for estimating this marginal likelihood by transforming it into a one-dimensional integral that is subsequently evaluated using a quadrature rule at multiple adaptively-chosen iso-likelihood contour levels.
Three different algorithms are proposed to estimate the probability mass at each adapted likelihood level using samples from importance sampling, stratified sampling, and Markov chain Monte Carlo sampling, respectively. The proposed approach is illustrated --- with comparisons to Monte Carlo, nested, and MultiNest sampling --- through four numerical examples. The first, an elementary example, shows the accuracies of the three proposed algorithms when the exact value of the marginal likelihood is known. 
The second example uses an 11-story building subjected to an earthquake excitation with an uncertain hysteretic base isolation layer with two models to describe the isolation layer behavior. 
The third example considers flow past a cylinder when the inlet velocity is uncertain. Based on the these examples, the method with stratified sampling is by far the most accurate and efficient method for complex model behavior in low dimension, particularly considering that this method can be implemented to exploit parallel computation. 
In the fourth example, the proposed approach is applied to heat conduction in an inhomogeneous plate with uncertain thermal conductivity modeled through a 100 degree-of-freedom Karhunen-Lo\`{e}ve expansion. 
The results indicate that 
MultiNest cannot efficiently handle the high-dimensional parameter space, whereas the proposed MCMC-based method more accurately and efficiently explores the parameter space.
The marginal likelihood results for the last three examples --- when compared with the results obtained from standard Monte Carlo sampling, nested sampling, and MultiNest algorithm --- show good agreement. 
\end{abstract}

\begin{keyword}
Bayesian model selection, marginal likelihood, probability integral transform, Markov chain Monte Carlo.
\end{keyword}

\end{frontmatter}


\section{Introduction}
\label{sec:intro}
In computational mechanics, one often encounters the dilemma of choosing a model, for a physical system or phenomenon, that is computationally efficient as well as accurate. This model may be used for response prediction, design optimization, experiment design, and so forth. There are a variety of model selection criteria based on information theory, representing a trade-off between model's simplicity and the fit accuracy, including but not limited to Akaike information criterion (AIC) \cite{sakamoto1986akaike}, Bayesian-Schwarz information criterion (BIC) \cite{schwarz1978estimating}, the focused information criterion (FIC) \cite{Claeskens2003}, the minimum description length principle (MDL) \cite{Gruenwald2019}, and the deviance information criterion (DIC) \cite{Spiegelhalter2014}. DIC has been developed concerning the prediction accuracy of future datasets rather than considering model selection and, therefore, Bayesian evidence outperforms DIC in some cases. However, two common philosophies to guide the model choice across a set of candidate models are \emph{model falsification} \cite{popper2005logic,de2018investigation} and \emph{model selection} \cite{burnham2002model} (which has been shown to be equivalent to approximate Bayesian computation (ABC) \cite{Agnimitra2024}). In model falsification, models are eliminated using measurement data but no relative judgement is provided among the remaining unfalsified models. 
On the other hand, in \emph{Bayesian model selection}, one or more model(s) are selected from a larger candidate set \cite{de2018computationally,de2019hybrid,rao2001model,chipman2001practical} by ranking them based on posterior model probabilities; however, to estimate those probabilities, a multidimensional integral over the model parameter domain must be evaluated.
This model selection approach has been applied to a wide range of fields, \eg finance \cite{cremers2002stock}, genetics \cite{Crouse2022}, signal processing \cite{andrieu2001model}, subsurface flow modeling \cite{elsheikh2014efficient}, constitutive material modeling \cite{madireddy2015bayesian,asaadi2017computational}, 
structural dynamics \cite{beck2004model,beck2010bayesian,cheung2010calculation,muto2008bayesian,mthembu2011model}, and validation of structural models \cite{grigoriu2008solution}. 
In Bayesian model selection \cite{Jaynes2003}, 
 Bayes' theorem is applied to quantify the likelihood that each model could generate the measurement data. 
A common form of Bayesian model selection uses the \textit{Bayes factor}, which is a ratio of the marginal likelihoods of, or evidences for, two models \cite{kass1995bayes,goodman1999toward}. 
Note that the \textit{Occam's razor} principle, which suggests that models with lesser complexity should be favored among models of comparable accuracy, is 
also embedded in Bayesian model selection, as discussed in Beck and Yuen \cite{beck2004model}, MacKay \cite{mackay1992bayesian,mackay1992bayesiana}, and Gull \cite{gull1988bayesian}. A combined model selection and falsification approach can also be followed as shown in De \textit{et al.} \cite{de2019hybrid}; however, this approach still requires 
the estimation of marginal likelihood for a few models. 

The main computational effort in Bayesian model selection arises in the accurate estimation of the \textit{evidence}, the marginal likelihood for each model, requiring many forward model simulations. (Note that the terms ``evidence'' and ``marginal likelihood'' are used interchangeably herein.) A standard Monte Carlo estimation proves costly in this case when the number of model parameters increases.
A number of methods have been proposed in recent decades to intelligently select the random samples needed to estimate the marginal likelihood or evidence. For example, the posterior harmonic mean estimator \cite{newton1994approximate} samples from the posterior distributions of the parameters; this estimator, however, may have infinite variance \cite{raftery2006estimating}. 
Importance sampling \cite{kass1995bayes,Tran2021} and its adaptive variants \cite{Bugallo2017,Schuster2018} have also been applied to improve the efficiency of standard Monte Carlo sampling to estimate the marginal likelihood. 
Annealed importance sampling \cite{neal2001annealed,Li2023} and the power posterior method \cite{friel2008marginal} are closely related to thermodynamic integration and may also be used. Ching \textit{et al.} \cite{ching2007transitional} introduced the Transitional Markov chain Monte Carlo method in which samples, drawn from intermediate distributions, ultimately converge to a target distribution, providing an estimate of the marginal likelihood. Sandhu \textit{et al.} \cite{sandhu2014bayesian,sandhu2017bayesian} used posterior samples of model parameters obtained from Markov chain Monte Carlo using the Chib-Jeliazkov method \cite{chib2001marginal} for this estimation. Similarly, subset simulation methods \cite{botev2012efficient,ChiachioBeckChiachioRus2014,DiazDelaO_GarbunoInigo_Au_Yoshida_2017,VakilzadehHuangBeckAbrahamsson2017} and polynomial chaos approaches \cite{nagel2016spectral} may also be employed to estimate the marginal likelihood; however, the estimates provided by these methods can be biased as they are prone to using correlated random samples or approximated likelihood functions.   

Skilling \cite{skilling2006nested} proposed a method for the estimation of marginal likelihood that converts the multi-dimensional marginal likelihood integral into a one-dimensional integral representation.  This method, known as nested sampling, simplifies the integration task, 
but requires more samples from the high-likelihood region. 
MultiNest is a widely-used algorithm for efficient implementation of nested sampling for multi-modal posteriors \cite{Feroz_2009,Dittmann2024}. Considering the same mechanism used by MultiNest, Feroz \textit{et al.}~\citep{Feroz2013} presented the more efficient importance nested sampling method by using previously discarded samples. 
Polson and Scott \cite{polson2014vertical} outlined an approach called ``vertical-likelihood Monte Carlo'' based on a Lebesgue representation of the marginal likelihood that unifies nested sampling and other developments for marginal likelihood estimation; 
however, evaluating the resulting converted one-dimensional integral is not straightforward and incurs exorbitant computational costs for large-scale complex models with high-dimensional uncertainty. 
Therefore, with the advent of newer and more complex models to elaborate system behavior in greater depth, there is a need for novel and enhanced methods to efficiently explore stochastic parameter spaces in higher dimensions. 
Readers are referred to Llorente \textit{et al.}~\cite{Llorente2023} for an extensive and up-to-date review of marginal likelihood computation for model selection, including detailed descriptions, advantages, and weaknesses.

Herein, an adaptive approach is proposed to estimate the marginal likelihood. 
In this proposed approach, a probability integral transformation is first used to convert the multidimensional integration of marginal likelihood into a one-dimensional integration in an approach similar to that in ``vertical-likelihood Monte Carlo'' \cite{polson2014vertical}. 
The resulting one-dimensional integral is then evaluated using a quadrature rule; the quadrature points are calculated using an adaptive likelihood level approach, where samples with increasing levels of likelihoods are sequentially generated.
Three algorithms to efficiently generate these random samples are presented herein: based on importance sampling, stratified sampling, and Markov chain Monte Carlo sampling, respectively. 
In the first algorithm, samples for the current likelihood level are generated from an importance distribution formed using the samples from previous levels. In the second proposed algorithm, a subset of strata that contains samples from the previous level are used to generate samples for the current likelihood level. In the third algorithm, Markov chains are run starting from the previous level's samples to generate samples for the current likelihood level.
The proposed algorithms provide flexibility in choosing the likelihood levels to focus on regions with high likelihood values, thereby providing better computational efficiency relative to standard Monte Carlo sampling. Further, samples from the posterior model parameter distribution are not needed for the estimation; additionally, moments of the posterior distribution are generated as by product of these algorithms.

The proposed algorithms are illustrated using four numerical examples. First, an elementary example with a Gaussian likelihood and a Gaussian prior is used so that the true value of the evidence or marginal likelihood is known and accuracies can be assessed. 
The second example studies an 11-story base-isolated building with uncertainty in the nonlinear hysteretic isolation layer. Using roof acceleration measurements, the evidence or marginal likelihood is estimated for a linear approximation of the isolation layer and for two nonlinear model classes, namely, Bouc-Wen and bilinear hysteresis models. A comparison with Monte Carlo, nested sampling, and MultiNest is used to evaluate the results computed with the proposed approach. 
In the third example, the velocities of a two-dimensional flow are measured at several points downstream from a cylinder in a closed channel.
The inlet velocity profile is assumed to be parabolic with an uncertain maximum inlet velocity distributed as a truncated Gaussian. 
The marginal likelihood of the fluid flow model using the discretized Navier-Stokes equation and the uncertain inlet velocity is computed using the proposed algorithms and again compared with that from Monte Carlo and nested sampling. 
In the final example, an inhomogeneous plate with uncertain thermal conductivity is used to illustrate one of the proposed algorithms. The thermal conductivity is expressed using a random field represented by a Karhunen-Loeve expansion. This example has a parameter space that is of significantly higher dimension compared to the previous three examples. The results show closer agreement with Monte Carlo than those provided by nested sampling.

This paper is organized as follows. The next section provides brief backgrounds of Bayesian model selection and various sampling strategies. The three proposed algorithms are presented in Section~\ref{sec:method} with a discussion of issues related to their implementation in Section~\ref{sec:disc}. Then, in Section~\ref{sec:ex}, four numerical examples are used to illustrate the proposed algorithms, followed by conclusions in Section~\ref{sec:conc}. 

\section{Background}

\subsection{Definition of a Model}
A model to represent some physical phenomenon is defined here by a set of mathematical equations. Based on the characteristics of these equations, a model may be linear or nonlinear, and may describe static or dynamic behavior. The $k$\textsuperscript{th} model is parameterized by a set of parameters $\thetaa_k \in \mathbb{R}^{n_{\thetaa_k}}$ (the model number $k$ will be omitted subsequently for notational simplicity). 

\subsection{Bayesian Model Selection}
Let $\mathscr{M}=\{\M_1$, $\M_2$,$\dots$, $\M_{\Nm}\}$ be the set of $\Nm$ different models considered to describe a particular system.  Given a data set $\D$ containing measurements from the physical system, the goal of Bayesian model selection is to select the most plausible model(s) to represent the system. 
The \emph{posterior} model probabilities (\ie the probability of each model conditioned on measurement data $\D$) are given by Bayes' theorem:
\begin{equation} \label{eq:mod_sel_defn}
\begin{split}
\Prob{\Mk|\D}=\frac{\pdf(\D|\Mk)\Prob{ \Mk} }{ \pdf(\D)},\quad {k}=1, 2, \dots, \Nm.\\
\end{split}
\end{equation}
where $\Prob{ \Mk}$ is an \textit{a priori} measure of model plausibility assigned by the modeler based on past experience, normalized so that $\sum_{k=1}^{\Nm} \Prob{\Mk}=1$, and denominator
\begin{equation}\label{eq:denominator}
\pdf(\D)=\sum_{k=1}^{\Nm}\pdf(\D|\Mk)\Prob{\Mk}
\end{equation}
using the theorem of total probability. Herein, {$\Prob{\cdot}$} denotes a probability; {$\pdf(\cdot)$} denotes a probability density and the notation $\thetaa_j\sim\pdf(\thetaa)$ denotes choosing a sample $\thetaa_j$ according to the density $\pdf(\thetaa)$.

For a particular model $\Mk$, the model evidence or marginal likelihood $\Evd^{(k)}=\pdf( \D|\Mk)$ is
\begin{equation} \label{eq:evd}
\begin{split}
\Evd^{(k)}&=\int_{\Thetaa} \pdf(\D|\thetaa,\Mk)\pdf(\thetaa|\Mk)d\thetaa\\
&=\int_{\Thetaa} \Le(\thetaa,\Mk)\pdf(\thetaa|\Mk)d\thetaa
\end{split}
\end{equation}
where {$\thetaa\in\Thetaa \subseteq \mathbb{R}^{n_{\thetaa}}$} is the (uncertain) parameter vector, with prior probability $\pdf(\thetaa|\Mk)$ for model $\Mk$, and $\Le(\thetaa,\Mk)$ $=\pdf(\D|\thetaa,\Mk)$ is the likelihood function (\ie the data likelihood given a model and parameter vector).  
(It is assumed herein that the data are continuous quantities, but the approach can be adapted for discrete quantities by replacing densities $\pdf(\cdot)$ with probabilities $\Prob{\cdot}$; mixed continuous/discrete quantities can be accommodated as long as either the likelihood $\Le(\thetaa,\Mk)$ or the parameter prior $\pdf(\thetaa|\Mk)$ are finite for all $\thetaa\in\Thetaa$.)


\subsection{Evidence Computation via Nested Sampling}\label{sec:NSandMN}


As shown in \eqref{eq:evd}, computing the model evidence (marginal likelihood) $\Evd^{(k)}$ is more computationally demanding for models with a greater number of parameters, since it requires computing the high-dimensional integral over the entire parameter space domain. Nested sampling was introduced by Skilling \cite{skilling2006nested} to convert the evaluation of the evidence into a tractable one-dimensional integral.  While this has proven to be a powerful technique, the underlying method still relies upon Monte Carlo sampling, which is poorly suited to multi-modal parameter spaces.
To address this challenge, Shaw \textit{et al.} \cite{shaw2007efficient} proposed a clustered nested sampling method to form multiple ellipsoidal clusters to capture multi-modal posterior distributions. The MultiNest algorithm was subsequently developed \cite{feroz2008multimodal} and improved \cite{Feroz_2009} by Feroz \textit{et al.}~as a more advanced means of simultaneous ellipsoidal nested sampling. A brief review of the MultiNest algorithm is presented below; for complete details, the interested reader should consult Feroz \textit{et al.}~\cite{Feroz_2009}.

The MultiNest method is an iterative algorithm that is initialized by selecting $N$ ``active points'' by randomly sampling over the entire parameter space.  To account for the multi-modality, MultiNest assigns these samples to clusters (groups) 
bounded by iso-likelihood contours. 
A series of (possibly overlapping) ellipsoids are then simultaneously formed and bounded around each cluster of samples.  
The optimal ellipsoidal decomposition is performed using an ``expectation-maximization'' approach to minimize the total volume $\Vt$ of the ellipsoids while satisfying $\Vt > X_i/f$, where $X_i\approx\exp{(-i/N)}$ is the expected prior volume at the $i$th iteration of the algorithm
and $0<f<1$ is a user-defined value for the target efficiency. As $f$ increases, the minimum total volume decreases, and the EM algorithm chooses smaller $\Vt$, leading to faster algorithm convergence, but at the possible cost that the algorithm might not cover the full likelihood volume, resulting in biased estimates. Another issue is the possibility of an overshoot of the ellipsoidal decomposition in higher dimensions, leading to a remarkable decrease in sampling efficiency.

After constructing the ellipsoidal bounds, new samples are generated uniformly from within the bounded ellipsoids. More specifically, the probability of choosing ellipsoid $l$ among the $n_{\text{ell},i}$ ellipsoids found during the $i$th iteration
is $V_l/\Vt$, in which $\Vt= \sum_{l=1}^{n_{\text{ell},i}} V_l$. Then, if a new sample has a larger likelihood value than the minimum likelihood within the current samples, \ie $\mathcal{L_{\mathrm{new}}}>\mathcal{L}_{\mathrm{min}}$, the newly drawn sample is accepted to replace the minimum likelihood value with a probability equal to the inverse of the number of ellipsoids containing the point; this process leads to proper sampling from all ellipsoids by reducing the possibility of sampling too many points from highly-intersected regions.

%
Using the updated set of samples, another set of ellipsoids is formed and new samples are generated to replace the new minimum likelihood value(s).  This iterative process continues until some convergence criterion is satisfied, \eg a maximum number of iterations or a tolerance related to the evidence evaluation.  
The advantage of MultiNest is that it is flexible enough to explore a range of complicated posterior shapes through constructing non-overlapping and overlapping ellipsoids. 
In this study, the target efficiency parameter $f$, representing the fraction of the expected prior volume enclosed by the ellipsoidal bounds, is fixed at 10\%. 
The only hyper-parameter set for each application is the number of samples, which is specified for each application.



\subsection{Review of Sampling Methods}\label{sec:review}
The proposed algorithm will use three sampling methods, which are reviewed in this section assuming, for notational simplicity, a scalar parameter $\theta\in\Theta$. 
\subsubsection{Importance sampling}\label{sec:is}
Importance sampling is used to estimate an expectation $\mu_f=\Exp_{\pdf}[f(\theta)]$ when sampling from the density $\pdf(\theta)$ is difficult \cite{ross2013simulation}. Importance sampling instead draws $N$ samples $\{\theta_j\}_{j=1}^N$ from a similar density function $\pdfis(\theta)$, called the importance sampling density (ISD), and gives the unbiased estimate
\begin{equation}
\hat\mu_f^\mathrm{IS}=\frac{1}{N}\sum_{j=1}^N f(\theta_j)w_j
\end{equation}
where the \textit{importance weights} $w_j=\pdf(\theta_j)/\pdfis(\theta_j)$ are used to correct the bias introduced by sampling from $\pdfis(\theta)$. The variance of the estimator is given by 
\begin{equation}
\mathrm{Var}_{\pdfis}\left[\hat\mu_f^\mathrm{IS}\right]=\frac{1}{N}\left\{\Exp_{\pdf}\left[f^2(\theta)\frac{\pdf(\theta)}{\pdfis(\theta)}\right]-\mu_f^2\right\}
\end{equation}
The reduction in variance obtained compared to a standard Monte Carlo estimator $\hat\mu_f^{\text{MC}}=\frac{1}{N}\sum_{j=1}^N f(\theta_j)$ with $\theta_j\sim \pdf(\theta)$ is given by
\begin{equation}
\mathrm{Var}_{\pdf}\left[\hat\mu_f^\mathrm{MC}\right]-\mathrm{Var}_{\pdfis}\left[\hat\mu^\mathrm{IS}_f\right]=\frac{1}{N}\left\{\Exp_{\pdf}\Bigg[f(\theta)^2\left(1-\frac{\pdf(\theta)}{\pdfis(\theta)}\right)\Bigg]\right\}
\end{equation}
Hence, the use of importance sampling can produce variance reduction by choosing the importance density proportional to $|f(\theta)|\pdf(\theta)$. However, $\pdf(\theta)$ or $\pdfis(\theta)$ are often known up to a constant. In that case, one may use a normalized importance sampling (NIS), which estimates the expectation as
\begin{equation}
\hat\mu_f^\mathrm{NIS}=\sum_{j=1}^N f(\theta_j)\tilde w_j
\end{equation}
where the normalized weights $\tilde w_j$ are given by 
\begin{equation}
\tilde w_j=\frac{w_j}{\sum_{k=1}^N w_k}=\frac{\pdf(\theta_j)/\pdfis(\theta_j)}{\sum_{k=1}^N\pdf(\theta_k)/\pdfis(\theta_k)}
\end{equation}
This normalized importance sampling estimator is biased but consistent (\ie asymptotically unbiased).
Choosing $\pdfis(\theta)>0$ whenever $\pdf(\theta)>0$ also achieves variance reduction. A measure of effectiveness in using the importance density $\pdfis(\theta)$, the effective sample size (ESS), is given by
\begin{equation}
\textrm{ESS}=\frac{\left(\sum_{j=1}^N w_j\right)^2}{\sum_{j=1}^N w_j^2}
\end{equation}

\subsubsection{Stratified sampling}
Stratified sampling suggests dividing the sample space $\Theta$ into $\nst$ disjoint subspaces $\{\Theta^{(s)}\}_{s=1}^{\nst}$ where $\cup_{s=1}^{\nst}\Theta^{(s)}=\Theta$ and $\Theta^{(r)}\cap\Theta^{(s)}=\ouremptyset$ for $r\neq s$. The mean of the quantity $f$ is then estimated within each of these strata, denoted as $\hat\mu_f^{(s)}$ for $s=1,\dots,\nst$.
The strata means are then combined, using the theorem of total probability, to give
\begin{equation}
\Exp_{\pdf}[f]\approx\hat\mu_f^\mathrm{SS}=\sum_{s=1}^{\nst}\hat\mu_f^{(s)}p^{(s)}
\end{equation}
where $p^{(s)}=\Prob{\Theta^{(s)}}$.
The variance reduction compared to the standard Monte Carlo method is given by McKay \textit{et al.} \cite{mckay1979comparison}
\begin{equation}
\mathrm{Var}_{\pdf}\left[\hat\mu_f^\mathrm{MC}\right]-\mathrm{Var}_{\pdf|\Theta}\left[\hat\mu_f^\mathrm{SS}\right]=\frac{1}{N}\sum_{s=1}^{\nst}\left[\hat\mu_f^{(s)}-\hat\mu_f^\mathrm{SS}\right]^2p^{(s)}
\end{equation}
where each strata mean $\hat\mu_f^{(s)}$ is calculated using $N^{(s)} = p^{(s)}N$ samples for $s=1,\dots,\nst$.

\subsubsection{Markov chain {Monte Carlo (MCMC)}} 
Markov chain Monte Carlo is used to sample from a distribution otherwise difficult to sample \cite{chib1995understanding,chib2001markov}.
For this purpose, a Markov chain with 
stationary distribution $\pi(\cdot)$ 
is constructed to explore the sample space $\Theta$ and choose samples \cite{andrieu2003introduction}. 
The stationary distribution $\pi(\theta)$ will be $\pdf(\theta|\Le(\theta)>\lambda)$ for some likelihood level threshold $\lambda$ 
--- \ie proportional to $\pdf(\theta)$ but only in the region where the likelihood exceeds some level $\lambda$. 
The stationary distribution of the chain satisfies
\begin{equation}
\pi(\eta)\mathrm{d}\eta=\int_{\Theta} K(\theta,\mathrm{d}\eta)\pi(\theta)\mathrm{d}\theta
\end{equation}
where the transition kernel of the Markov chain is defined as \cite{tierney1994markov,chib1995understanding}
\begin{equation}\label{eq:kernel}
K(\theta,\mathrm{d}\eta)=f(\theta,\eta)\mathrm{d}\eta+\left[1-\int_{\Theta}f(\theta,\eta)\mathrm{d}\eta\right]\delta_{\theta}(\mathrm{d}\eta)
\end{equation}
for some transition function $f(\theta,\eta)$ with conditions that $f(\theta,\theta)=0$ and $\delta_\theta(\mathrm{d}\eta)=1$ for $\theta\in \mathrm{d}\eta$ or 0 for $\theta\notin \mathrm{d}\eta$. The probability of the chain staying at $\theta$ is $\left[1-\int_{\Theta}f(\theta,\eta)\mathrm{d}\eta\right]$. 
A popular algorithm for generating samples using MCMC is the \textit{Metropolis-Hastings (MH) algorithm}, which assumes that the transition from $\theta$ to $\eta$, for $\theta\neq \eta$, is of the form \cite{chib1995understanding}
\begin{equation}
f_{\mathrm{MH}}(\theta,\eta)=q(\theta,\eta)\alpha(\theta,\eta)
\end{equation}
where $q(\theta,\eta)$ is an assumed proposal density and $\alpha(\theta,\eta)$ is the acceptance rate defined by
\begin{equation}
\alpha(\theta,\eta)=\begin{cases}
\min\left[\frac{\pi(\eta)q(\eta,\theta)}{\pi(\theta)q(\theta,\eta)},1\right], \quad ~\!\text{for $\pi(\theta)q(\theta,\eta)>0$} \\
1, \qquad \qquad \qquad \quad\quad  \text{for $\pi(\theta)q(\theta,\eta)=0$}
\end{cases}
\end{equation}
However, with increasing dimension, the MH algorithm becomes inefficient. Au and Beck \cite{au2001estimation} proposed a modified algorithm, used in the numerical examples herein, with a higher acceptance of generated candidate samples by using the MH algorithm component-wise.


\section{Proposed Methodology}\label{sec:method}
\subsection{Probability integral transform}
Evidence integral \eqref{eq:evd}, omitting model variable {$\Mk$}, is rewritten in \eqref{eq:evidencenested} by: 
\begin{itemize}
	\item[(i)] expressing the likelihood
	$\Le(\thetaa)=\int_0^{_{\Le(\thetaa)}}\mathrm{d}\lambda$;
	\item[(ii)] rearranging the order of integration;
	\item[(iii)] defining a monotonically nonincreasing function $\chi(\lambda)$ that is the probability mass enclosed in the parameter space subset where likelihoods $\Le({\thetaa})$ exceed $\lambda$; 
	\item[(iv)] defining the monotonically nonincreasing inverse function $\varphi(\chi)$, the likelihood level that contains a given probability mass; \ie, $\varphi(\chi(\lambda))\equiv\lambda$;
	\item[(v)] changing the variable of integration; 
	\item[(vi)] approximating the integral by discretizing over $\chi$; and 
	\item[(vii)] defining $\varphi_i = \varphi(\chii) \equiv \lambda_i$.
\end{itemize}
%
\begin{equation}
\label{eq:evidencenested}
\begin{split}
\Evd&=\int_{\Thetaa} \pdf(\D|\thetaa)\pdf(\thetaa)d\thetaa\\
&=\int_{\Thetaa} \Le(\thetaa)\pdf(\thetaa)d\thetaa\\
&= \int_{\Thetaa} {\color{black}\underbrace{\color{black}\left[\int_0^{\Le(\thetaa)}\mathrm{d}\lambda\right]}_{\Le(\thetaa)}} \pdf(\thetaa)\mathrm{d}\thetaa\\
&= \int_{0}^{\infty} {\color{black}\underbrace{\color{black}\left[\int_{\Le(\thetaa)>\lambda}\pdf(\thetaa)\mathrm{d}\thetaa\right]}_{\chi(\lambda)}} \mathrm{d}\lambda\\
&= \int_0^\infty \chi (\lambda)\mathrm{d}\lambda
= \int_0^1 \varphi (\chi)\mathrm{d}\chi
\approx \sum_{i=1}^{i_\mathrm{final}} \varphi_i \Delta{\chii}
= \sum_{i=1}^{i_\mathrm{final}} \lambda_i \Delta{\chii}
\end{split}
\end{equation}
Once the multidimensional integral is converted into a one-dimensional integral, the problem becomes estimation of $\chii$ for a corresponding $\varphi_i\equiv \lambda_i$. 
\fref{fig:int} shows that the samples from successive likelihood levels are used to convert to $(\lambda,\chi)$ coordinates to estimate the integral using a quadrature rule.
Note that the authors previously showed \cite{de2018computationally} that the transformation \eqref{eq:evidencenested} is also the backbone of the nested sampling method \cite{skilling2006nested}. 


For efficient estimation of the $\chii$ to perform the quadrature, different variance reduction methods can be implemented; three such algorithms are proposed in the remainder of this section. Each algorithm is iterative with common stopping criteria that will be discussed in Section~\ref{sec:stop}. Additionally, the iterations begin with a low likelihood $\lambda_0$ close to zero and advance to greater likelihood levels $\lambda_i>\lambda_{i-1}$, as discussed in Section~\ref{sec:lik_level}. 

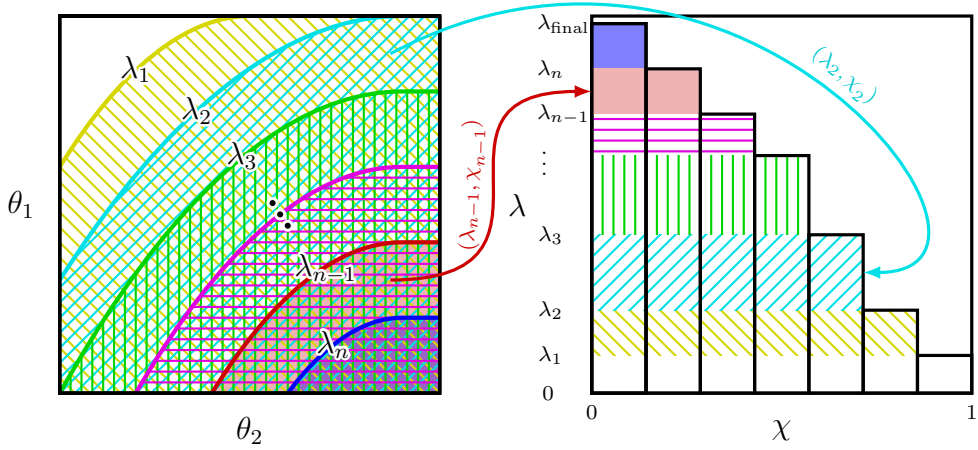
\begin{figure}
\centering
\begin{tikzpicture}[scale=1,every node/.style={minimum size=1cm},on grid]

\definecolor{myyellow}{rgb}{.84,.84,0} 
\definecolor{mycyan}{rgb}{0,.88,.9} 
\definecolor{mygreen}{rgb}{0,.83,0} 
\definecolor{mymagenta}{rgb}{.88,0,.88} 
\definecolor{myred}{rgb}{.8,0,0} 
\definecolor{myblue}{rgb}{0,0,1}
    \begin{scope}[
        yshift=0,every node/.append style={
        yslant=0.5,xslant=-1},yslant=0,xslant=0
                  ]

\path[save path=\contboundary] (0,0) rectangle (5,5);
\path [save path=\conta](3,0) parabola bend (4.5,1) (5,1)  ;
\path [save path=\contb](2,0) parabola bend (4.5,2) (5,2)  ;
\path [save path=\contc](1,0) parabola bend (4.5,3) (5,3)  ;
\path [save path=\contd](0,0) parabola bend (4.5,4) (5,4)  ;
\path [save path=\conte](0,1.5) parabola bend (4.5,5) (5,5)  ;
\path [save path=\contf](0,3) parabola bend (2.5,5) (2.5,5)  ;
\makeatletter\pgfsyssoftpath@setcurrentpath{\conta}\makeatother \path [save path=\contaa] -- (5,0) -- cycle;
\makeatletter\pgfsyssoftpath@setcurrentpath{\contb}\makeatother \path [save path=\contbb] -- (5,0) -- cycle;
\makeatletter\pgfsyssoftpath@setcurrentpath{\contc}\makeatother \path [save path=\contcc] -- (5,0) -- cycle;
\makeatletter\pgfsyssoftpath@setcurrentpath{\contd}\makeatother \path [save path=\contdd] -- (5,0) -- cycle;
\makeatletter\pgfsyssoftpath@setcurrentpath{\conte}\makeatother \path [save path=\contee] -- (5,0) -- (0,0) -- cycle;
\makeatletter\pgfsyssoftpath@setcurrentpath{\contf}\makeatother \path [save path=\contff] -- (5,5) -- (5,0) -- (0,0) -- cycle;

\makeatletter\pgfsyssoftpath@setcurrentpath{\contbb}\makeatother \fill[myred!30];
\makeatletter\pgfsyssoftpath@setcurrentpath{\contaa}\makeatother \fill[myblue!60!myred!60!white];

\begin{scope}
\makeatletter\pgfsyssoftpath@setcurrentpath{\contboundary}\makeatother \clip[];
\end{scope}

\begin{scope}
\makeatletter\pgfsyssoftpath@setcurrentpath{\contff}\makeatother \clip[];
\foreach \x in {.2,.4,...,10} {\draw[myyellow,xslant=-1,thick] (\x,0) -- (\x,5);}
\end{scope}

\begin{scope}
\makeatletter\pgfsyssoftpath@setcurrentpath{\contee}\makeatother \clip[];
\foreach \x in {-4.8,-4.6,...,5} {\draw[mycyan,xslant=1,thick] (\x,5) -- (\x,0);}
\end{scope}

\begin{scope}
\makeatletter\pgfsyssoftpath@setcurrentpath{\contdd}\makeatother \clip[];
\foreach \x in {0.142857142857143,0.285714285714286,...,5} {\draw[mygreen,thick] (\x,0) -- (\x,5);}
\end{scope}

\begin{scope}
\makeatletter\pgfsyssoftpath@setcurrentpath{\contcc}\makeatother \clip[];
\foreach \x in {0.142857142857143,0.285714285714286,...,5} {\draw[mymagenta,thick] (0,\x) -- (5,\x);}
\end{scope}

\begin{scope}
\makeatletter\pgfsyssoftpath@setcurrentpath{\contboundary}\makeatother \clip[]; 
\makeatletter\pgfsyssoftpath@setcurrentpath{\conta}\makeatother \draw [ultra thick,myblue];
\makeatletter\pgfsyssoftpath@setcurrentpath{\contb}\makeatother \draw [ultra thick,myred];
\makeatletter\pgfsyssoftpath@setcurrentpath{\contc}\makeatother \draw [ultra thick,mymagenta];
\makeatletter\pgfsyssoftpath@setcurrentpath{\contd}\makeatother \draw [ultra thick,mygreen];
\makeatletter\pgfsyssoftpath@setcurrentpath{\conte}\makeatother \draw [ultra thick,mycyan];
\makeatletter\pgfsyssoftpath@setcurrentpath{\contf}\makeatother \draw [ultra thick,myyellow];
\end{scope}

\makeatletter\pgfsyssoftpath@setcurrentpath{\contboundary}\makeatother \draw[black,very thick] (0,0) rectangle (5,5);

    \end{scope} 

\node at (1,4.3) {\contour{white}{$\lambda_1$}};    
\node at (1.8,3.75) {\contour{white}{$\lambda_2$}};    
\node at (2.4,3.15) {\contour{white}{$\lambda_3$}};    
\foreach \x in {-.1,0,.1} {
\fill[white] (2.9+\x,2.37-1.5*\x) circle (0.08); 
\fill[black] (2.9+\x,2.37-1.5*\x) circle (0.04); 
}

\node at (3.6,0.65) {\contour{white}{$\lambda_n$}};    
\node at (3.5,1.65) {\contour{white}{$\lambda_{n-1}$}};    

 \fill[black]
      (0,2.5) node [left] {$\theta_1$}
         (2.5,-1) node [above] {$\theta_2$};	
         
         \draw[black,very thick] (7,0) rectangle (12,5);

\pgfmathsetmacro{\psimax}{4.90}
\pgfmathsetmacro{\psia}{4.30}
\pgfmathsetmacro{\psib}{3.70}
\pgfmathsetmacro{\psic}{3.15}
\pgfmathsetmacro{\psid}{2.10}
\pgfmathsetmacro{\psie}{1.10}
\pgfmathsetmacro{\psif}{0.50}

\fill[myblue!50] (7,\psia) rectangle (7+1*5/7,\psimax); 
\fill[myred!30] (7,\psib) rectangle (7+2*5/7,\psia);

\begin{scope}
\clip[] (7,0) rectangle (7+7*5/7,\psif);
\end{scope}

\begin{scope}
\clip[] (7,\psif) rectangle (7+6*5/7,\psie); 
\foreach \x in {.2,.4,...,10} {\draw[myyellow,xslant=-1,thick] (7+\x,0) -- (7+\x,5);}
\end{scope}

\begin{scope}
\clip[] (7,\psie) rectangle (7+5*5/7,\psid); 
\foreach \x in {-4.8,-4.6,...,5} {\draw[mycyan,xslant=1,thick] (7+\x,0) -- (7+\x,5);}
\end{scope}

\begin{scope}
\clip[] (7,\psid) rectangle (7+4*5/7,\psic); 
\foreach \x in {0.142857142857143,0.285714285714286,...,5} {\draw[mygreen,thick] (7+\x,0) -- (7+\x,5);}
\end{scope}

\begin{scope}
\clip[] (7,\psic) rectangle (7+3*5/7,\psib); 
\foreach \x in {0.062857142857143,0.205714285714286,...,5} {\draw[mymagenta,thick] (7+0,\x) -- (7+5,\x);}
\end{scope}

\begin{scope}
\clip[] (7,\psib) rectangle (7+2*5/7,\psia); 
\end{scope}

\pgfmathsetmacro{\ytickhorizloc}{6.12}
\node[font=\scriptsize,anchor=west] at (\ytickhorizloc,0) {$0$~~~~~};
\node[font=\scriptsize,anchor=west] at (\ytickhorizloc,\psif) {$\lambda_{1\hphantom{-1}}$};
\node[font=\scriptsize,anchor=west] at (\ytickhorizloc,\psie) {$\lambda_{2\hphantom{-1}}$};
\node[font=\scriptsize,anchor=west] at (\ytickhorizloc,\psid) {$\lambda_{3\hphantom{-1}}$};
\node[font=\scriptsize,anchor=west] at (\ytickhorizloc,\psic) {$\vdots$~~~~~~};
\node[font=\scriptsize,anchor=west] at (\ytickhorizloc,\psib) {$\lambda_{n-1}$};
\node[font=\scriptsize,anchor=west] at (\ytickhorizloc,\psia) {$\lambda_{n\hphantom{-1}}$};
\node[font=\scriptsize,anchor=west] at (\ytickhorizloc,\psimax) {$\lambda_\mathrm{final}$};

\draw[black,very thick] (7+0*5/7,0) rectangle (7+1*5/7,\psimax);
\draw[black,very thick] (7+1*5/7,0) rectangle (7+2*5/7,\psia);
\draw[black,very thick] (7+2*5/7,0) rectangle (7+3*5/7,\psib);
\draw[black,very thick] (7+3*5/7,0) rectangle (7+4*5/7,\psic);
\draw[black,very thick] (7+4*5/7,0) rectangle (7+5*5/7,\psid);
\draw[black,very thick] (7+5*5/7,0) rectangle (7+6*5/7,\psie);
\draw[black,very thick] (7+6*5/7,0) rectangle (7+7*5/7,\psif);
\draw[black,very thick] (7,0) rectangle (7+1*5/7,\psimax);

 \fill[black]
      (\ytickhorizloc-.1,2.5) node [] {$\lambda$}
         (9.5,-1) node [above] {$\chi$};	
          \fill[black]
      (7,-.25) node [font=\scriptsize] {0}
         (12,-.25) node [font=\scriptsize] {1};	

\draw[-latex,very thick,myred] (4.35,1.5) .. controls (7.0,1.5) and (4.5,\psia/2+\psib/2) .. (7,\psia/2+\psib/2) node [midway, above,sloped,inner sep=1mm,minimum height=0mm,font=\scriptsize] {$(\lambda_{n-1},\chisub{n-1})$};
\draw[-latex,very thick,mycyan] (4.35,4.5) .. controls (9,7) and (7+5*5/7+2.5,\psid/2+\psie/2) .. (7+5*5/7,\psid/2+\psie/2) node [midway, above,sloped,inner sep=1mm,minimum height=0mm,font=\scriptsize,color=mycyan!90!black] {$(\lambda_2,\chisub{2})$};



\end{tikzpicture}
\caption{Likelihood level adapted estimation of marginal likelihood. Note that, $\lambda_1<\lambda_2<\dots<\lambda_{n-1}<\lambda_n$.}\label{fig:int}
\end{figure}

\subsection{Likelihood level adapted importance sampling (LLA-IS)}
\label{sec:mlis}

The first method employs importance sampling to estimate $\chii$, which can be written
\begin{equation}\label{eq:chi_nis}
\chii=\Prob{\thetaa\in\widetilde{\Thetaa}_i}=\Exp\left[\mathbb{I}_{\widetilde{\Thetaa}_i}(\thetaa)\right]
\end{equation}
where the parameter domain $\widetilde\Thetaa_i$ is the region in which the likelihood is above $\lambda_i$, \ie
 \begin{equation}\label{eq:Theta_def}
 \widetilde\Thetaa_i=\{\thetaa|\Le(\thetaa)>\lambda_i,\thetaa\in\Thetaa\}
 \end{equation}
 for positive integer $i\in\mathbb{Z}^+$, and where the indicator function $\mathbb{I}_{(\cdot)}(\cdot)$ is defined by
\begin{equation}
\mathbb{I}_{\widetilde{\Thetaa}_i}(\thetaa)=\begin{cases}
1 \qquad \text{if $\thetaa \in \widetilde{\Thetaa}_i$}\\0 \qquad \text{otherwise }
\end{cases}.
\end{equation}
Hence,
the quantity $\chii$ in \eqref{eq:chi_nis} can be approximated with the estimator
\begin{equation}\label{eq:lla_is_chi}
\widehatchisub{i}^{\mathrm{IS}}= \frac{1}{N}\sum_{j=1}^N\mathbb{I}_{\widetilde{\Thetaa}_i}(\thetaa_{j,i})w_{j,i}
\end{equation}
where $w_{j,i}=\pdf(\thetaa_{j,i})/\pdfis(\thetaa_{j,i})$; $\pdfis(\cdot)$ is the importance sampling density with $\pdfis(\thetaa)>0$ whenever $\pdf(\thetaa)>0$. 

The main challenge applying this algorithm lies in choosing the appropriate form for the importance density $\pdfis(\thetaa)$.
It may be chosen as a normal distribution with its mean at the posterior mode $\hat\thetaa$ guessed from the previous set of samples and arbitrarily large variance $\widehat\Sigm$ \cite{mukherjee2006nested}.
However, the spread of the importance density should be chosen carefully as a small spread can result in an erroneous estimation whereas a large spread will result in a slow convergence or non-convergence. 
Herein, the mean of this Gaussian importance density is fixed at the mean of the retained samples and, unless otherwise specified, the standard deviation at twice the standard deviation estimated using the retained samples. 
A challenge for LLA-IS is that finding a good importance sampling density in high-dimensional spaces can become increasingly difficult as well as computationally expensive, so it may suffer from the curse of dimensionality.

An algorithm for likelihood level adapted estimation of marginal likelihood (evidence) using importance sampling is presented in Algorithm \ref{alg:mlnis}. The algorithm starts with $\widehatchisub{0}^{\mathrm{IS}}=1$ and a small $\lambda_0$ that is very small or equal to zero. The importance density at the start is same as the parameter prior. The algorithm chooses likelihood threshold $\lambda_i$, greater than $\lambda_{i-1}$ in the previous iteration, after the simulation of $N$ samples for practicality, as discussed subsequently in Section~\ref{sec:lik_level}.  Then, the corresponding $\widehatchisub{i}^{\mathrm{IS}}$ is estimated using the importance weight as shown in \eqref{eq:lla_is_chi}. 
At the end of every iteration, the evidence is updated using a quadrature rule (the algorithms herein use a rectangular quadrature, though they could be easily adapted toward higher-order schemes). 
Additional samples are drawn from the current importance density if the effective sample size falls below a pre-chosen threshold $\gamma_\mathrm{S}$, which is set to $0.5N$ herein (though this never occurred for the examples explored herein in Section~\ref{sec:ex}); a large threshold $\gamma_\mathrm{S}$ increases the computational budget as new random samples are more often added to the sample pool but a small $\gamma_\mathrm{S}$, on the other hand, will lead to slow convergence. 
In \fref{fig:mlis}, the increasing likelihood contours as well as a schematic of importance densities formed near the high-likelihood region are shown as the algorithm progresses.  

\begin{figure}
	\centering
	\begin{tikzpicture}[scale=0.75,every node/.style={minimum size=1cm},on grid,/tikz/fill between/on layer=main]
	
	\begin{scope}[
	yshift=-170,every node/.append style={
		yslant=0.5,xslant=-1},yslant=0.5,xslant=-1
	]

	
	\draw [ultra thick,red,name path=cont1](3,0) parabola bend (4.5,1) (5,1)  ;
	\draw [ultra thick,red!70,name path=cont2](2,0) parabola bend (4.5,2) (5,2)  ;
	\draw [ultra thick,red!40,name path=cont3](1,0) parabola bend (4.5,3) (5,3)  ;
	\draw [ultra thick,yellow,name path=cont4](0,0) parabola bend (4.5,4) (5,4)  ;
	\draw [ultra thick,green!60,name path=cont5](0,1) parabola bend (4.5,5) (5,5)  ;
	\draw [ultra thick,teal,name path=cont6](0,2) parabola bend (3,5) (3,5)  ;
	\draw [ultra thick,cyan,name path=cont7](0,3) parabola bend (2,5) (2,5)  ;
	\draw [ultra thick,blue,name path=cont8](0,4) parabola bend (1,5) (1,5)  ;

	\draw [thick,name path=cont0](3,0) -- (5,0) -- (5,1);
	\draw [thick,name path=cont9](0,4) -- (0,5) -- (1,5);

	\tikzfillbetween[
	of=cont1 and cont2
	] {pattern=grid, pattern color=red!60};
	\tikzfillbetween[
	of=cont2 and cont3
	] {pattern=grid, pattern color=orange};
	\tikzfillbetween[
	of=cont0 and cont1
	] {pattern=grid, pattern color=red};
	\tikzfillbetween[
	of=cont3 and cont4
	] {pattern=grid, pattern color=yellow!70!orange};
	\tikzfillbetween[
	of=cont4 and cont5
	] {pattern=grid, pattern color=green!60!yellow};
	\tikzfillbetween[
	of=cont5 and cont6
	] {pattern=grid, pattern color=blue!20!green};
	\tikzfillbetween[
	of=cont6 and cont7
	] {pattern=grid, pattern color=blue!50!green};
	\tikzfillbetween[
	of=cont7 and cont8
	] {pattern=grid, pattern color=green!10!blue};
	\tikzfillbetween[
	of=cont8 and cont9
	] {pattern=grid, pattern color=black!10!blue};

	\draw[black,very thick] (0,0) rectangle (5,5);
	
	\end{scope} 
	\begin{scope}[
	yshift=-83,every node/.append style={
		yslant=0.5,xslant=-1},yslant=0.5,xslant=-1
	]

	\fill[white,fill opacity=0.75] (0,0) rectangle (5,5);
	
	\draw [thick,name path=cont1,dashed](3,0) parabola bend (4.5,1) (5,1)  ;
	\draw [thick,name path=cont2,dashed](2,0) parabola bend (4.5,2) (5,2)  ;
	\draw [thick,name path=cont3,dashed](1,0) parabola bend (4.5,3) (5,3)  ;
	\draw [thick,name path=cont4,dashed](0,0) parabola bend (4.5,4) (5,4)  ;
	\draw [thick,name path=cont5,dashed](0,1) parabola bend (4.5,5) (5,5)  ;
	\draw [thick,name path=cont6,dashed](0,2) parabola bend (3,5) (3,5)  ;
	\draw [thick,name path=cont7,dashed](0,3) parabola bend (2,5) (2,5)  ;
	\draw [thick,name path=cont8,dashed](0,4) parabola bend (1,5) (1,5)  ;
	\draw[black,very thick] (0,0) rectangle (5,5);
	%

	\draw[red,fill=red!50,opacity=0.2] (2.2,2.2) circle (2cm);
	\draw[red,fill=red!50,opacity=0.9] (2.2,2.2) circle (0.05cm);
		\draw[red,fill=red!50,opacity=0.2] (2.5,1.5) circle (1.25cm);
		\draw[red,fill=red!50,opacity=0.9] (2.5,1.5) circle (0.05cm);
				\draw[red,fill=red!50,opacity=0.2] (2.85,1) circle (0.75cm);
				\draw[red,fill=red!50,opacity=0.9] (2.85,1) circle (0.05cm);

	\end{scope}

	\draw[-latex,thick](4,-5)node[right]{Iso-likelihood contours}
	to[out=180,in=90] (1.95,-4.75);

    \contourlength{1.5pt}%
	\node at (-4.15,-3.5) {\contour{white}{$\lambda_1$}};    
	\node at (-3.40,-3.5) {\contour{white}{$\lambda_2$}};    
	\node at (-2.65,-3.5) {\contour{white}{$\lambda_3$}};    
	
	\node at (3.1,-3.5) {\contour{white}{$\lambda_n$}};    
\node at (2,-3.5) {\contour{white}{$\cdots$}};    

\node at (0.6,-3.5) {\contour{white}{$\lambda_i$}};   
\node at (-0.9,-3.5) {\contour{white}{$\cdots$}};    
	
	\fill[black]
	(-2.5,-5.8) node [above] {$\theta_1$}
	(2.4,-5.9) node [above] {$\theta_2$};

	\end{tikzpicture}
	\caption{Likelihood level adapted importance sampling (LLA-IS) method: the iso-likelihood contours are shown with $\lambda_1<\lambda_2<\dots<\lambda_n$; importance densities are formed successively to generate samples from high likelihood region.} \label{fig:mlis}
\end{figure}
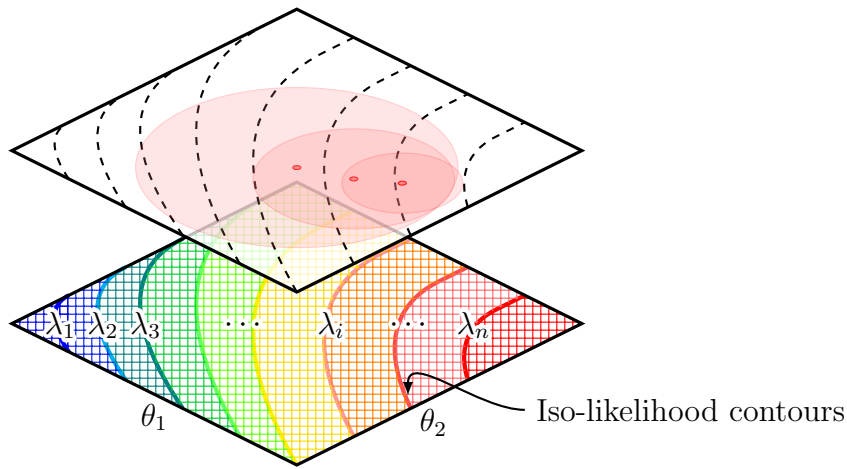

\begin{algorithm}
	\DontPrintSemicolon
	\CommentSty{\color{newblue}}
	{\bf{Initialization:}} Set $\widehatchisub{0}^{\mathrm{IS}}=1$, $\widehat\Evd_0^{\mathrm{~\!IS}}=0$, $\pdfis_0(\cdot)=\pdf(\cdot)$, and $\lambda_0 = 0$; choose a suitable $\gamma_\mathrm{S}$ based on a computational budget (\eg $\gamma_\mathrm{S}=0.5N$)\; 
	Set $i=1$\;
	\While{Stopping Criterion = FALSE}{
		Draw samples $\thetaa_{j,i}\sim\pdfis_{i-1}(\thetaa)$\textsuperscript{$\dagger$} for $j=1,\dots,N$\;
		Select a suitable new likelihood level $\lambda_i>\lambda_{i-1}$ (see Section~\ref{sec:lik_level})\;
		Define the likelihood exceedance region $\widetilde\Thetaa_i=\{\thetaa|\Le(\thetaa)>\lambda_i,\thetaa\in\Thetaa\}$\;
		Calculate likelihood values $\Le(\thetaa_{j,i})$, $j=1,\dots,N$, for these samples\;
		Evaluate the importance weights $w_{j,i}=\pdf(\thetaa_{j,i})/\pdfis_i(\thetaa_{j,i})$, $j=1,\dots,N$\;
		\uIf{$\mathrm{ESS}=\left(\sum_{j=1}^N w_{j,i}\right)^2 / \sum_{j=1}^N w_{j,i}^2 >\gamma_\mathrm{S}$}{
			$\widehatchisub{i}^{\mathrm{IS}}\approx \frac{1}{N}\sum_{j=1}^N\mathbb{I}_{\widetilde{\Thetaa}_i}(\thetaa_{j,i})w_{j,i}$\;}
		\Else
		{Draw another $\Delta N = \lceil \gamma_\mathrm{S} - \mathrm{ESS}  \rceil$ samples from density $\pdfis_i(\thetaa)$\;
			$N\leftarrow N + \Delta N$\;
            For each of the $\Delta N$ new samples, compute likelihoods as on line {7} and weights as on line {8}\; 
            Go to line {9}\;}
		Update the marginal likelihood $\widehat\Evd_i^{\mathrm{~\!IS}}=\widehat\Evd_{i-1}^{\mathrm{~\!IS}}+\lambda_i\left(\widehatchisub{i-1}^{\mathrm{IS}}-\widehatchisub{i}^{\mathrm{IS}}\right)$\;
		Assume\textsuperscript{$\dagger$} a proper importance density $\pdfis_i(\cdot)$ based on the samples $\thetaa_{1,i},\dots,\thetaa_{N,i}$ from this iteration\;
		$i\leftarrow i+1$\;
	}
	\KwResult{The marginal likelihood $\widehat\Evd^\mathrm{~\!IS}\subscript{final}$}
	%
	%
	\vspace{2pt}
	\caption{Likelihood level adapted marginal likelihood estimation using importance sampling (LLA-IS).}\label{alg:mlnis}
    \algorithmfootnote{%
	\textsuperscript{$\dagger$}\:Herein, for $i \ge 1$, a Gaussian distribution is assumed for $\pdfis_i(\cdot)$, with a mean the same as the mean of the samples with $\Le(\theta_{j,i})>\lambda_i$ and a 
    standard deviation that is twice the sample standard deviation, as discussed in Section~\ref{sec:mlis}.\\
	}
\end{algorithm}

\clearpage
\subsection{Likelihood level adapted stratified sampling (LLA-SS)}

The second algorithm proposed here implements stratified sampling in the likelihood level adapted approach,
estimating $\Prob{\thetaa\in\widetilde{\Thetaa}_i}$ by focusing on the strata where at least one sample $\thetaa_{j,i}$ has $\Le(\thetaa_{j,i})>\lambda_i$. \fref{fig:mlss} shows a hypothetical set of strata, highlighting those containing at least one sample with likelihood greater than some level $\lambda_i$.

Let the sample space $\Thetaa$ be divided into $\nst$ disjoint strata $\Thetaa^{(1)}$, $\Thetaa^{(2)}$, $\dots$, $\Thetaa^{(\nst)}$ such that $\cup_{s=1}^{\nst} \Thetaa^{(s)}=\Thetaa$ and $\Thetaa^{(r)}\cap\Thetaa^{(s)}=\ouremptyset$ for $r\neq s$. In iteration $i$, which will target samples whose likelihoods exceed a level $\lambda_i$, samples will only be drawn from the subset of strata, denoted by the strata index set $\mathcal{I}_{i-1}$, that contained one or more samples with likelihoods exceeding the likelihood level $\lambda_{i-1}$ during the previous iteration $(i-1)$; for the first iteration, $\mathcal{I}_0$ contains the indices of all strata. A new likelihood level $\lambda_i$ is chosen larger than the previous $\lambda_{i-1}$ (see Section~\ref{sec:lik_level}). 

In iteration $i$, $N_i^{(s)}$ additional samples are added in stratum $s$, resulting in the cumulative number of samples in stratum $s$ through iteration $i$ to be $N_{i,\mathrm{cum}}^{(s)}=\sum_{k=1}^i N_k^{(s)}$; \ie all of the previous samples from stratum $s$ are retained and re-used.
Thus, in each stratum $s$ during iteration $i$, draw $N_i^{(s)}$ additional samples 
$\thetaa_{j,i}^{(s)}\sim \pdf(\thetaa)$ but restricted only to those in $\Thetaa^{(s)}$ (see below for one approach to this selective sampling), and compute their corresponding likelihoods $\Le(\thetaa_{j,i}^{(s)})$, for $j=N_{i-1,\mathrm{cum}}^{(s)}+1,\dots,N_{i,\mathrm{cum}}^{(s)}$, where $N_{0,\mathrm{cum}}^{(s)} \equiv 0$. Then, the fraction of probability mass in stratum $s$ that exceeds likelihood $\lambda_i$ is estimated with 
\begin{equation}\label{eq:SS_chi_s_estimate}
\widehatchisub{i}^{(s)}=\frac{1}{N_{i,\mathrm{cum}}^{(s)}}\sum_{j=1}^{N_{i,\mathrm{cum}}^{(s)}}\mathbb{I}_{\widetilde{\Thetaa}_i}\left(\thetaa_{j,i}^{(s)}\right)
\end{equation}
If $p^{(s)}=\Prob{\thetaa\in \Thetaa^{(s)}}$ is the total probability mass in stratum $s$, then the combined estimate of the probability mass with likelihood exceeding $\lambda_i$ is 
\begin{equation}
\widehatchisub{i}^{\mathrm{SS}} = \sum_{s\in \mathcal{I}_{i-1}} p^{(s)} \widehatchisub{i}^{(s)}
\end{equation}
The mean of this estimate is $\Exp\left[\widehatchisub{i}^{\mathrm{SS}}\right]=\chii$ and its variance is
\begin{equation}
\mathrm{Var}\left[\widehatchisub{i}^{\mathrm{SS}}\right]=\frac{\sigma^2_{\chii}}{N_{i,\mathrm{cum}}}-\frac{1}{N_{i,\mathrm{cum}}}\sum_{s\in\mathcal{I}_{i-1}} \left[\widehatchisub{i}^{(s)}-\widehatchisub{i}^{\mathrm{SS}}\right]^2p^{(s)}
\end{equation}
where $N_{i,\mathrm{cum}}=\sum_{s=1}^{\nst}N_{i,\mathrm{cum}}^{(s)}$ and $\sigma^2_{\chii}$ is the variance of $\chi(\lambda_i)$.
\begin{algorithm}
	\DontPrintSemicolon
	{\bf{Initialization:}} Set $\widehatchisub{0}^\mathrm{SS}=1$, $\widehat\Evd_0^\mathrm{SS}=0$, and $\lambda_0=0$\;
	Divide the sample space into $\nst$ strata $\left\{\Thetaa^{(s)}\right\}_{s=1}^{\nst}$\;
	Set the initial strata index set $\mathcal{I}_{0}=\{1,\dots,\nst\}$ to contain all strata.\;
	Set $i=1$\;
	\While{Stopping Criterion = FALSE}{
		\For{$s\in\mathcal{I}_{i-1}$}{
			Set $N_i^{(s)}=N/\lvert \mathcal{I}_{i-1} \rvert$, where $|\cdot|$ here denotes the number of strata in the set\;
            Set $N_{i,\mathrm{cum}}^{(s)}=\sum_{k=1}^i N_k^{(s)}$\;
			Draw $N_i^{(s)}$ additional  
            samples $\thetaa_{j,i}^{(s)}\sim\pdf(\thetaa)$ from stratum $\Thetaa^{(s)}$  for $j=N_{i-1,\mathrm{cum}}^{(s)}+1,\dots,N_{i,\mathrm{cum}}^{(s)}$ where $N_{0,\mathrm{cum}}^{(s)}\equiv 0$\;
			Calculate likelihood values $\Le\left(\thetaa_{j,i}^{(s)}\right)$ for these samples\;
			Assign weights $w_i^{(s)}=p^{(s)}/N_{i,\mathrm{cum}}^{(s)}$ where $p^{(s)}=\Prob{\thetaa\in\Thetaa^{(s)}}$\;
		}
		Select a suitable new likelihood level $\lambda_i>\lambda_{i-1}$ (see Section~\ref{sec:lik_level})\; 
		Estimate $\widehatchisub{i}^{\mathrm{SS}}= \sum_{s\in\mathcal{I}_{i-1}}w_i^{(s)}\sum_{j=1}^{N_{i,\mathrm{cum}}^{(s)}}\mathbb{I}_{\widetilde{\Thetaa}_i}\left(\thetaa_{j,i}^{(s)}\right)$\; 
		Update the marginal likelihood $\widehat\Evd_i^\mathrm{SS}=\widehat\Evd_{i-1}^\mathrm{SS}+\lambda_i\left(\widehatchisub{i-1}^{\mathrm{SS}}-\widehatchisub{i}^{\mathrm{SS}}\right)$\;
		Let $\mathcal{I}_{i} = \left\{ s \big\lvert \max_j \Le\left(\thetaa_{j,i}^{(s)}\right)>\lambda_i \right\}$, 
        which will be used in the next iteration, be the set of all strata with at least one sample $\thetaa_{j^*,i}^{(s)}$ having likelihood greater than $\lambda_i$\; 
		$i\leftarrow i+1$\;
	}
	\KwResult{The marginal likelihood $\widehat\Evd\subscript{final}^\mathrm{SS}$}
	%
	%
	\vspace{2pt}
	\caption{Likelihood level adapted marginal likelihood estimation using stratified sampling (LLA-SS).}\label{alg:mlss}
\end{algorithm}
Finally, the marginal likelihood \eqref{eq:evd} is estimated, using a rectangle-rule integration, with
\begin{equation}
\widehat{\Evd}^{\mathrm{SS}} = \sum_i\lambda_i \left(\widehatchisub{i-1}^{\mathrm{SS}}-\widehatchisub{i}^{\mathrm{SS}}\right)
\end{equation}

Algorithm \ref{alg:mlss} shows the steps for this stratified sampling likelihood-level adapted estimation of the marginal likelihood $\Evd$, where the initial strata index set $\mathcal{I}_0$ includes all strata but, as the iterations progress, samples are obtained from a smaller number of strata that contain likelihood values larger than prior likelihood levels. \fref{fig:mlss} depicts a schematic of the LLA-SS algorithm for a two-dimensional parameter vector $\thetaa$, assuming a grid of $\nstk{k}$ substrata in dimension $k$ so that $\nst=\prod_{k=1}^{\nth}\nstk{k}$, showing a shaded region that contains the strata with $\Le(\thetaa)>\lambda_i$ from which additional samples are drawn. 
Implementing stratified sampling can be difficult with high-dimensional parameter spaces (\ie large $\nth$). However, as this LLA-SS iterates, only a very few strata will remain with $\Le(\thetaa)>\lambda_i$.


One way to construct the strata when the domain is gridded in each dimension and the elements of $\thetaa$ are independent
is to sample $\thetaa = [\theta_1, \theta_2, \dots, \theta_{\nth}]^\mathrm{T}$ from the joint probability density $\pdf(\thetaa) = \prod_{k=1}^{\nth}\pdf(\theta_k)$ using the inverses of the marginal cumulative density functions $F_{\theta_k}(\cdot)$: for stratum $s$, composed of substratum $s_k \in \{1,\dots,\nstk{k}\}$ in dimension $k$ for $k=1,2,\dots,\nth$, let 
\begin{equation}
\Thetaa^{(s)} = \Theta^{(s_1)}_1\times \Theta^{(s_2)}_2\times\dots\times \Theta^{(s_{\nth})}_{\nth}
\end{equation}
where $\Theta_k^{(s_k)}=\left\{F_{\theta_k}^{-1}(y)\lvert y\in \Big(\frac{s_k-1}{\nstk{k}},\frac{s_k}{\nstk{k}}\Big]\right\}$
so that $\Prob{\thetaa\in\Thetaa^{(s)}}=\prod_{i=1}^{\nth}\left({1}/{\nstk{k}}\right)=1/\nst$.  


\begin{figure}
\centering
\begin{tikzpicture}[scale=0.75,every node/.style={minimum size=1cm},on grid,/tikz/fill between/on layer=main]
		  	
    \begin{scope}[
        yshift=-170,every node/.append style={
        yslant=0.5,xslant=-1},yslant=0.5,xslant=-1
                  ]


\draw [ultra thick,red,name path=cont1](3,0) parabola bend (4.5,1) (5,1)  ;
\draw [ultra thick,red!70,name path=cont2](2,0) parabola bend (4.5,2) (5,2)  ;
\draw [ultra thick,red!40,name path=cont3](1,0) parabola bend (4.5,3) (5,3)  ;
\draw [ultra thick,yellow,name path=cont4](0,0) parabola bend (4.5,4) (5,4)  ;
\draw [ultra thick,green!60,name path=cont5](0,1) parabola bend (4.5,5) (5,5)  ;
\draw [ultra thick,teal,name path=cont6](0,2) parabola bend (3,5) (3,5)  ;
\draw [ultra thick,cyan,name path=cont7](0,3) parabola bend (2,5) (2,5)  ;
\draw [ultra thick,blue,name path=cont8](0,4) parabola bend (1,5) (1,5)  ;

\draw [thick,name path=cont0](3,0) -- (5,0) -- (5,1);
\draw [thick,name path=cont9](0,4) -- (0,5) -- (1,5);

\tikzfillbetween[
    of=cont1 and cont2
  ] {pattern=grid, pattern color=red!60};
  \tikzfillbetween[
    of=cont2 and cont3
  ] {pattern=grid, pattern color=orange};
    \tikzfillbetween[
    of=cont0 and cont1
  ] {pattern=grid, pattern color=red};
      \tikzfillbetween[
    of=cont3 and cont4
  ] {pattern=grid, pattern color=yellow!70!orange};
      \tikzfillbetween[
    of=cont4 and cont5
  ] {pattern=grid, pattern color=green!60!yellow};
      \tikzfillbetween[
    of=cont5 and cont6
  ] {pattern=grid, pattern color=blue!20!green};
      \tikzfillbetween[
    of=cont6 and cont7
  ] {pattern=grid, pattern color=blue!50!green};
      \tikzfillbetween[
    of=cont7 and cont8
  ] {pattern=grid, pattern color=green!10!blue};
        \tikzfillbetween[
    of=cont8 and cont9
  ] {pattern=grid, pattern color=black!10!blue};

        \draw[black,very thick] (0,0) rectangle (5,5);
  
    \end{scope} 
    \begin{scope}[
            yshift=-83,every node/.append style={
            yslant=0.5,xslant=-1},yslant=0.5,xslant=-1
            ]
        \fill[white,fill opacity=0.75] (0,0) rectangle (5,5);
        \draw[step=10mm, black, thick] (0,0) grid (5,5); 
        \draw[black,very thick] (0,0) rectangle (5,5);
%
        
        \fill[yellow,fill opacity=0.15] (2,0) rectangle (5,2);
         \fill[yellow,fill opacity=0.15] (3,2) rectangle (5,3);
         \draw[orange!90!black, ultra thick] (2,0) -- (2,2) -- (3,2) -- (3,3) -- (5,3);

        \foreach \x in {0,...,4}
    \foreach \y in {0,...,4} 
    \foreach \z in {0,...,4}
    {\draw[red] (\x+0.5+0.5*rand,\y+0.5+0.5*rand) circle (0.025cm);
    }
    
            \foreach \x in {1,...,4}
    \foreach \y in {0,...,3} 
        \foreach \z in {0,...,10}
    {\draw[red] (\x+0.5+0.5*rand,\y+0.5+0.5*rand) circle (0.025cm);
    }
    
     \foreach \x in {2,...,4}
    \foreach \y in {0,...,2} 
        \foreach \z in {0,...,10}
    {\draw[red] (\x+0.5+0.5*rand,\y+0.5+0.5*rand) circle (0.025cm);
    }
    
         \foreach \x in {3,...,4}
    \foreach \y in {0,...,1} 
        \foreach \z in {0,...,50}
    {\draw[red] (\x+0.5+0.5*rand,\y+0.5+0.5*rand) circle (0.025cm);
    }
    
             \foreach \x in {4,...,4}
    \foreach \y in {0,...,0} 
        \foreach \z in {0,...,100}
    {\draw[red] (\x+0.5+0.5*rand,\y+0.5+0.5*rand) circle (0.025cm);
    }
    
                 \foreach \x in {2,...,4}
    \foreach \y in {4,...,4} 
        \foreach \z in {0,...,5}
    {\draw[red] (\x+0.5+0.5*rand,\y+0.5+0.5*rand) circle (0.025cm);
    }
    
                     \foreach \x in {0,...,0}
    \foreach \y in {0,...,3} 
        \foreach \z in {0,...,5}
    {\draw[red] (\x+0.5+0.5*rand,\y+0.5+0.5*rand) circle (0.025cm);
    }

    \end{scope}

           \draw[-latex,thick](2.3,2)node[right]{Stratum}
        to[out=180,in=90] (0,.5);

        
           \draw[-latex,thick](5.3,.75)node[right]{Samples}
        to[out=180,in=90] (1.85,-.5);

    \draw[-latex,thick](4,-5)node[right]{Iso-likelihood contours}
        to[out=180,in=90] (1.95,-4.75);

\contourlength{1.5pt}%
\node at (-4.15,-3.5) {\contour{white}{$\lambda_1$}};    
\node at (-3.40,-3.5) {\contour{white}{$\lambda_2$}};    
\node at (-2.65,-3.5) {\contour{white}{$\lambda_3$}};    

\node at (3.1,-3.5) {\contour{white}{$\lambda_n$}};    
\node at (2,-3.5) {\contour{white}{$\cdots$}};    

\node at (0.6,-3.5) {\contour{white}{$\lambda_i$}};   
\node at (-0.9,-3.5) {\contour{white}{$\cdots$}};   

 \fill[black]
      (-2.5,-5.8) node [above] {$\theta_1$}
         (2.4,-5.9) node [above] {$\theta_2$};	

\draw[fill = yellow!10,draw=orange!90!black,ultra thick] (6,-1) rectangle (6.75,-1.75); 
\node at (9.75,-1.4) {Strata with $\Le(\thetaa)>\lambda_i$};

\end{tikzpicture}
\caption{Likelihood level adapted stratified sampling (LLA-SS) method: the iso-likelihood contours are shown with $\lambda_1<\lambda_2<\dots<\lambda_n$; more samples are generated from the strata with high likelihood values. For example, the shaded strata with the thick boundary lines have at least one sample with $\Le(\thetaa)>\lambda_i$.} \label{fig:mlss}
\end{figure}
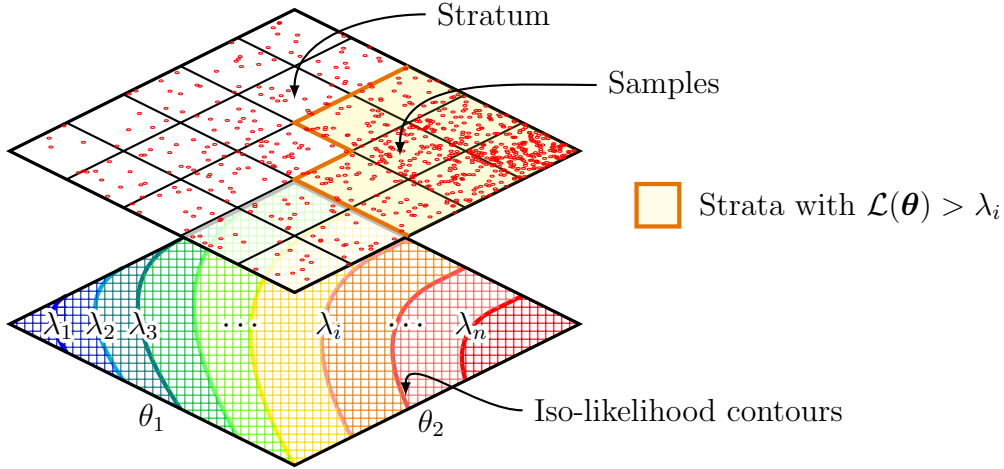

\subsection{Likelihood level adapted Markov chain Monte Carlo (LLA-MCMC)} \label{sec:lla-mcmc} 

In the following proposed algorithm, the Markov chain Monte Carlo method \cite{andrieu2003introduction} is used in a sequential manner. 
As the likelihood level $\lambda_i$ increases with iteration, the domain is reduced with $\widetilde\Thetaa_i \subset \widetilde\Thetaa_{i-1}$, where $\widetilde\Thetaa_i$ is the region in which the likelihood is above $\lambda_i$, as defined in \eqref{eq:Theta_def}. Hence, at iteration $i$, $\chii$ can be written as \cite{del2005genealogical,cerou2012sequential}
\begin{equation}\label{eq:MCMC_nested}
\begin{split}
\chii=\Prob{\thetaa\in\widetilde{\Thetaa}_i}&
=\prod_{k=1}^{i}\Prob{\thetaa\in\widetilde{\Thetaa}_k|\thetaa\in\widetilde{\Thetaa}_{k-1}}\\
&=\Prob{\thetaa\in\widetilde{\Thetaa}_i|\thetaa\in\widetilde{\Thetaa}_{i-1}} \prod_{k=1}^{i-1}\Prob{\thetaa\in\widetilde{\Thetaa}_k|\thetaa\in\widetilde{\Thetaa}_{k-1}}\\
&=\frac{\Prob{\thetaa\in\widetilde{\Thetaa}_i}}{\Prob{\thetaa\in\widetilde{\Thetaa}_{i-1}}} \widehatchisub{i-1}\\
&\approx \widehatchisub{i-1}\frac{\sum_{j=1}^{N}\left[\mathbb{I}_{\widetilde{\Thetaa}_i}\left(\thetaa_{j,i}\right)\right]}{\sum_{j=1}^N\left[\mathbb{I}_{\widetilde{\Thetaa}_{i-1}}\left(\thetaa_{j,i-1}\right)\right]} \\
\end{split}
\end{equation}
where $\widetilde{\Thetaa}_0\equiv \Thetaa$ and $\left\{\thetaa_{j,l} \right\}_{l>0}$, $j=1,\dots,N$, are obtained from Markov chains with $\thetaa_{j,0}\sim \pdf(\thetaa)$ with a transition kernel that can be assumed of the form $K(\theta,\mathrm{d}\eta)$ given in \eqref{eq:kernel}. Hence, starting with $\widehatchisub{0}=1$, $\widehatchisub{i}$ can be estimated in the $i$\textsuperscript{th} iteration by sampling from Markov chains that are initiated with the previous iteration's samples $\{\thetaa_{j,i-1}\}_{j=1}^N$, each satisfying $\Le\left(\thetaa_{j,i-1}\right)>\lambda_{i-1}$.
Note that, for a high-dimensional parameter space, a Markov chain with a higher acceptance rate is used herein, \eg the modified Metropolis-Hastings algorithm (MMH) \cite{au2001estimation}. 
An algorithm for likelihood-level adapted estimation of marginal likelihood or evidence using a Markov chain is presented in Algorithm \ref{alg:mlpa}, where $\pdf_i(\thetaa)=\pdf(\thetaa|\thetaa\in\widetilde{\Thetaa}_i)$ 
are successively defined and samples are drawn from them using the Markov chain Monte Carlo method. A schematic of this algorithm, with the Markov chain propagating samples to higher likelihood regions, is shown in \fref{fig:mlpa}. In the figure, Markov chains are shown to propagate to high-likelihood regions as samples are drawn from them.

\begin{algorithm}
	\DontPrintSemicolon
	\CommentSty{\color{newblue}}
	{\bf{Initialization:}} Set $\widehatchisub{0}^{~\!\mathrm{MCMC}}=1$, $\widehat\Evd^{~\!\mathrm{MCMC}}_0=0$, $\lambda_0=0$, and $\pdf_0(\thetaa)=\pdf(\thetaa)$\;
	Draw $N$ samples $\{\thetaa_{j,0}\}_{j=1}^N$ from the distribution $\pdf_0(\thetaa)$\;
	Set $i=1$\;
	\While{Stopping Criterion = FALSE}{
	Select a suitable new likelihood level $\lambda_i>\lambda_{i-1}$ (see Section~\ref{sec:lik_level})\;
		Define $\pdf_i(\thetaa)=\pdf(\thetaa|\thetaa\in\widetilde\Thetaa_i)$\;
		Identify the $N_\mathrm{pass} \leq N$ passing samples $\{\thetaa_{k,i}\}_{k=1}^{N_\mathrm{pass}} = \{\thetaa_{j,i} | \Le(\thetaa_{j,i})>\lambda_i \}$\;
        Remove the $N-N_\mathrm{pass}$ non-passing samples for which $\Le(\thetaa_{j,i})\leq\lambda_i$\;
		Draw $N-N_\mathrm{pass}$ samples from the distribution $\pdf_i(\thetaa)$ by running $N-N_\mathrm{pass}$ Markov chains, each starting from a sample randomly selected from $\{\thetaa_{k,i}\}_{k=1}^{N_\mathrm{pass}}$ to replace a non-passing sample\; 
		Evaluate likelihood values $\Le\left(\thetaa_{j,i}\right)$ for the newly drawn samples\; 
		
		Estimate $\widehatchisub{i}^{~\!\mathrm{MCMC}}= {\widehatchisub{i-1}^{~\!\mathrm{MCMC}}}\sum_{j=1}^{N}\mathbb{I}_{\widetilde{\Thetaa}_i}(\thetaa_{j,i})/N$\;
		Update the marginal likelihood $\widehat\Evd^{~\!\mathrm{MCMC}}_i=\widehat\Evd^{~\!\mathrm{MCMC}}_{i-1}+\lambda_i\left(\widehatchisub{i-1}^{~\!\mathrm{MCMC}}-\widehatchisub{i}^{~\!\mathrm{MCMC}}\right)$\;
		$i\leftarrow i+1$\;
	}
	\KwResult{The marginal likelihood $\widehat\Evd^{~\!\mathrm{MCMC}}\subscript{final}$}
	%
	%
	\vspace{2pt}
	\caption{Likelihood level adapted marginal likelihood estimation using particle approximation (LLA-MCMC).}\label{alg:mlpa}
\end{algorithm}

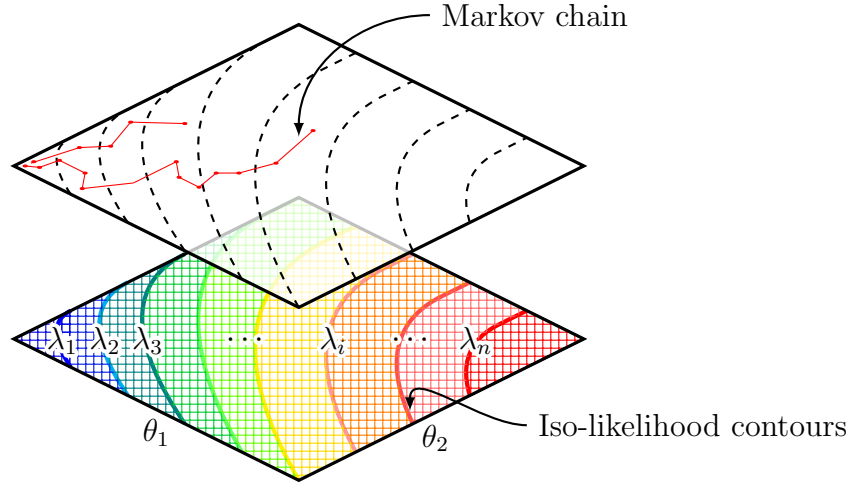
\begin{figure}
\centering
\begin{tikzpicture}[scale=0.75,every node/.style={minimum size=1cm},on grid,/tikz/fill between/on layer=main]
		  	
    \begin{scope}[
        yshift=-170,every node/.append style={
        yslant=0.5,xslant=-1},yslant=0.5,xslant=-1
                  ]


\draw [ultra thick,red,name path=cont1](3,0) parabola bend (4.5,1) (5,1)  ;
\draw [ultra thick,red!70,name path=cont2](2,0) parabola bend (4.5,2) (5,2)  ;
\draw [ultra thick,red!40,name path=cont3](1,0) parabola bend (4.5,3) (5,3)  ;
\draw [ultra thick,yellow,name path=cont4](0,0) parabola bend (4.5,4) (5,4)  ;
\draw [ultra thick,green!60,name path=cont5](0,1) parabola bend (4.5,5) (5,5)  ;
\draw [ultra thick,teal,name path=cont6](0,2) parabola bend (3,5) (3,5)  ;
\draw [ultra thick,cyan,name path=cont7](0,3) parabola bend (2,5) (2,5)  ;
\draw [ultra thick,blue,name path=cont8](0,4) parabola bend (1,5) (1,5)  ;

\draw [thick,name path=cont0](3,0) -- (5,0) -- (5,1);
\draw [thick,name path=cont9](0,4) -- (0,5) -- (1,5);

\tikzfillbetween[
    of=cont1 and cont2
  ] {pattern=grid, pattern color=red!60};
  \tikzfillbetween[
    of=cont2 and cont3
  ] {pattern=grid, pattern color=orange};
    \tikzfillbetween[
    of=cont0 and cont1
  ] {pattern=grid, pattern color=red};
      \tikzfillbetween[
    of=cont3 and cont4
  ] {pattern=grid, pattern color=yellow!70!orange};
      \tikzfillbetween[
    of=cont4 and cont5
  ] {pattern=grid, pattern color=green!60!yellow};
      \tikzfillbetween[
    of=cont5 and cont6
  ] {pattern=grid, pattern color=blue!20!green};
      \tikzfillbetween[
    of=cont6 and cont7
  ] {pattern=grid, pattern color=blue!50!green};
      \tikzfillbetween[
    of=cont7 and cont8
  ] {pattern=grid, pattern color=green!10!blue};
        \tikzfillbetween[
    of=cont8 and cont9
  ] {pattern=grid, pattern color=black!10!blue};

        \draw[black,very thick] (0,0) rectangle (5,5);
  
    \end{scope} 
    \begin{scope}[
            yshift=-83,every node/.append style={
            yslant=0.5,xslant=-1},yslant=0.5,xslant=-1
            ]

        \fill[white,fill opacity=0.75] (0,0) rectangle (5,5);
        
        \draw [thick,name path=cont1,dashed](3,0) parabola bend (4.5,1) (5,1)  ;
\draw [thick,name path=cont2,dashed](2,0) parabola bend (4.5,2) (5,2)  ;
\draw [thick,name path=cont3,dashed](1,0) parabola bend (4.5,3) (5,3)  ;
\draw [thick,name path=cont4,dashed](0,0) parabola bend (4.5,4) (5,4)  ;
\draw [thick,name path=cont5,dashed](0,1) parabola bend (4.5,5) (5,5)  ;
\draw [thick,name path=cont6,dashed](0,2) parabola bend (3,5) (3,5)  ;
\draw [thick,name path=cont7,dashed](0,3) parabola bend (2,5) (2,5)  ;
\draw [thick,name path=cont8,dashed](0,4) parabola bend (1,5) (1,5)  ;
        \draw[black,very thick] (0,0) rectangle (5,5);
%

\draw[red,fill=red] (0.1,4.9) circle (0.025cm);
\draw[red] (0.1,4.9) -- (0.2,4.75);
\draw[red,fill=red] (0.2,4.75) circle (0.025cm);
\draw[red] (0.5,4.7) -- (0.2,4.75);

\draw[red,fill=red] (0.5,4.7) circle (0.025cm);

\draw[red] (0.5,4.7) -- (0.5,4.25);

\draw[red,fill=red] (0.5,4.25) circle (0.025cm);
\draw[red] (0.2,4) -- (0.5,4.25);

\draw[red,fill=red] (0.2,4) circle (0.025cm);
\draw[red] (0.2,4) -- (0.75,3.65);

\draw[red,fill=red] (1.5,3.65) circle (0.025cm);
\draw[red] (1.5,3.65) -- (0.75,3.65);

\draw[red,fill=red] (1.25,3.35) circle (0.025cm);
\draw[red] (1.25,3.35) -- (1.5,3.65);

\draw[red,fill=red] (1.25,3) circle (0.025cm);
\draw[red] (1.25,3) -- (1.25,3.35);

\draw[red,fill=red] (1.65,3.1) circle (0.025cm);
\draw[red] (1.25,3) -- (1.65,3.1);

\draw[red,fill=red] (1.85,2.9) circle (0.025cm);
\draw[red] (1.65,3.1) -- (1.85,2.9);

\draw[red,fill=red] (2.35,2.75) circle (0.025cm);
\draw[red] (1.85,2.9) -- (2.35,2.75);

\draw[red,fill=red] (3.25,3) circle (0.025cm);
\draw[red] (3.25,3) -- (2.35,2.75);

\draw[red,fill=red] (0.25,4.9) circle (0.025cm);
\draw[red] (0.2,4.9) -- (0.9,4.75);
\draw[red,fill=red] (0.9,4.75) circle (0.025cm);
\draw[red] (1.2,4.5) -- (0.9,4.75);
\draw[red,fill=red] (1.2,4.5) circle (0.025cm);
\draw[red] (1.2,4.5) -- (1.8,4.75);
\draw[red,fill=red] (1.8,4.75) circle (0.025cm);
\draw[red] (1.8,4.75) -- (2.25,4.25);
\draw[red,fill=red] (2.25,4.25) circle (0.025cm);

%
%
%
%
%

    \end{scope}

           \draw[-latex,thick](2.3,2.25)node[right]{Markov chain}
        to[out=180,in=90] (0,.1);

        

    \draw[-latex,thick](4,-5)node[right]{Iso-likelihood contours}
        to[out=180,in=90] (1.95,-4.75);

\contourlength{1.5pt}%
\node at (-4.15,-3.5) {\contour{white}{$\lambda_1$}};    
\node at (-3.40,-3.5) {\contour{white}{$\lambda_2$}};    
\node at (-2.65,-3.5) {\contour{white}{$\lambda_3$}};    

\node at (3.1,-3.5) {\contour{white}{$\lambda_n$}};    
\node at (2,-3.5) {\contour{white}{$\cdots$}};    

\node at (0.6,-3.5) {\contour{white}{$\lambda_i$}};   
\node at (-0.9,-3.5) {\contour{white}{$\cdots$}};

 \fill[black]
      (-2.5,-5.8) node [above] {$\theta_1$}
         (2.4,-5.9) node [above] {$\theta_2$};

\end{tikzpicture}
\caption{Likelihood level adapted particle approximation (LLA-MCMC) method: the iso-likelihood contours are shown with $\lambda_1<\lambda_2<\dots<\lambda_n$; Markov chains are run to generate samples with $\Le(\thetaa)>\lambda_i$ starting from a sample with $\Le(\thetaa)>\lambda_{i-1}$.} \label{fig:mlpa}
\end{figure}


%



\section{Discussion of the Proposed Approach}\label{sec:disc}

\subsection{Estimation of posterior moments}
The posterior statistics of the model parameters are often sought from a Bayesian analysis.
Using the approach proposed herein, the samples corresponding to each of the likelihood levels from the three algorithms may also be used to evaluate the posterior moments of the model parameters $\thetaa$, using the rejected samples and the change in evidence value at each step, without any significant computational cost. For example, the mean and variance of $\thetaa$ can be evaluated using
\begin{equation}
\begin{split}
\Exp[\thetaa]&\approx\frac{1}{\widehat\Evd\subscript{final}}\sum_{i=1}^{i\subscript{final}}\Delta\widehat\Evd_i \bar\thetaa_{i}\\
\mathrm{Var}[\thetaa]&\approx\frac{1}{\widehat\Evd\subscript{final}}\sum_{i=1}^{i\subscript{final}}\Delta\widehat\Evd_i \bar\thetaa_{i}^2-\left(\Exp[\thetaa]\right)^2\\
\end{split}
\end{equation}
where $\Delta\widehat\Evd_i=\lambda_i(\widehatchisub{i-1}-\widehatchisub{i})$ and $\bar\thetaa_{i}$ is either a randomly chosen sample from (which avoids estimating the parameter means at every level), or the mean of (which is more accurate), $\{\thetaa_{j,i}|\Le(\thetaa_{j,i})\in[\lambda_{i-1},\lambda_i]\}$ that has been used to estimate $\chii$.


\subsection{Stopping criteria}\label{sec:stop}
Different stopping criteria, based on accuracy and/or computational cost, can be used in these algorithms, namely: (i) the change in evidence is less than 
some threshold $\Delta\Evd_{\mathrm{tol}}$, often taken to be 1\% or 0.1\%;
(ii) the total number of iterations exceeds a pre-chosen number $i_\mathrm{final}$; (iii) $\widehatchisub{i}$ is less than some pre-specified tolerance $\chisub{\mathrm{tol}}$; and/or (iv) the number of sample likelihood calculations has exceeded some threshold.
The examples herein utilize one or more of these stopping criteria.  While not used herein, an additional criterion could be to stop when the likelihood level $\lambda_i$ of the current iteration is within some fraction of the theoretical maximum of the likelihood function (if known).

\subsection{Accuracy}
If $\chisub{\mathrm{final}} \equiv \chisub{i_\mathrm{final}}$ is the final (\ie after iteration $i_\mathrm{final}$) probability mass contained in regions with likelihoods greater than $\lambda_\mathrm{final} \equiv \lambda_{i_\mathrm{final}}$, the evidence in \eqref{eq:evidencenested} can be written as \cite{chopin2010properties}
\begin{equation}
\begin{split}
\Evd&=\int_0^1 \varphi(\chi)d\chi\\
&=\sum_{i=1}^{i_\mathrm{final}} \lambda_i \Delta\widehatchisub{i}+\underbrace{\int_{0}^{\chisub{\mathrm{final}}} \varphi(\chi)d\chi }_{\errorcomponent_\mathrm{t}}+ \underbrace{\left[\int_{\chisub{\mathrm{final}}}^{1} \varphi(\chi)d\chi-\sum_{i=1}^{i_\mathrm{final}} \lambda_i \Delta\chi(\lambda_i)\right]}_{\errorcomponent_\mathrm{n}}\\
&\qquad\qquad \qquad\qquad +\underbrace{\sum_{i=1}^{i_\mathrm{final}}\lambda_i\left[\Delta\chi(\lambda_i)-\Delta\widehatchisub{i}\right]}_{\errorcomponent_\mathrm{s}}\\
&=\sum_{i=1}^{i_\mathrm{final}} \lambda_i \Delta\widehatchisub{i}+\errorcomponent_{\mathrm{t}}+\errorcomponent_{\mathrm{n}}+\errorcomponent_{\mathrm{s}}
\end{split}
\end{equation}
where $\Delta \widehat \chii = (\widehatchisub{i-1}-\widehatchisub{i})$ is an estimate of $\Delta\chi(\lambda_i)=\chi(\lambda_{i-1})-\chi(\lambda_i)$, $\errorcomponent_{\mathrm{t}}$ is the truncation error, $\errorcomponent_\mathrm{n}$ is the numerical integration error, and $\errorcomponent_\mathrm{s}$ is the stochastic error. 

Since the algorithms are halted when $\chi=\chisub{\mathrm{final}} \ll 1$, rather than at $\chi=0$, the truncation error $\errorcomponent_{\mathrm{t}}$ arises; however, $\errorcomponent_\mathrm{t}$ is very small if a sufficient number of iterations are used. 
If $\mathrm{d}\varphi/\mathrm{d}\chi$ is bounded in $[\chisub{\mathrm{final}},1]$, the numerical integration error $\errorcomponent_{\mathrm{n}}$ in the proposed algorithms for a rectangular rule is of the order $\mathcal{O}(N^{-1})$, where $N$ is the number of uncorrelated samples. Finally, the stochastic error $\errorcomponent_{\mathrm{s}}$ is asymptotically unbiased with a convergence rate of $\mathcal{O}(N^{-1/2})$ for the methods used here, as shown in \ref{sec:app1}. Hence, the convergence of $\errorcomponent_\mathrm{s}$ dictates the convergence of the algorithms here due to its slower dependence on $N^{-1/2}$.

\subsection{Choice of likelihood levels}\label{sec:lik_level}
The numerical integration error $\errorcomponent_\mathrm{n}$ can be reduced by using very small decrements in $\chi$, which, however, leads to higher computational cost. 
On the other hand, choosing $\chii \ll \chisub{i-1}$ will lead to an inaccurate estimate of the marginal likelihood. 
Note that the decrement in $\chii$ depends on the selection and increment of $\lambda_i$. Hence, to ensure that the decrements in $\chi$ are not very large, small increments in $\lambda$ are suggested.
For example, based on the computational budget and required accuracy, $\lambda_i$ can be assumed of the form $\lambda_i=10^\alpha\lambda_{i-1}$ with $\alpha>0$ chosen accordingly. 
Another strategy, utilized in the numerical examples herein, is to choose $\lambda_i$ as the $N^\mathrm{reject}_i$th lowest of all likelihood values
, where $N^\mathrm{reject}_i=\lceil{f_i^\mathrm{reject}N_i}\rceil$ for some fixed $f_i^\mathrm{reject}\in(0,1)$. Note that, at the first few iterations, a fixed but large $f_i^\mathrm{reject}$ may result in $\lambda_i<\lambda_{i-1}$; to avoid this herein, $f_i^\mathrm{reject}$ is gradually increased during the initial iterations to a fixed maximum value and, if $\lambda_i<\lambda_{i-1}$ is observed, $f_i^\mathrm{reject}$ is increased to avoid any such issues. 

\section{Numerical Illustrations}\label{sec:ex} 
The three proposed methods for evaluating the evidence are compared against other established approaches through a series of four examples in this section.  While the first example has an exact solution for the marginal likelihood, a standard Monte Carlo-based sampling approach is treated as the ground truth evidence value in the remaining three examples.  Examples I and II present comparisons against MultiNest, whereas Examples II, III, and IV provide comparisons against nested sampling.  
A high-level comparison of some of the characteristics of the algorithms implemented within this study is given in Table~\ref{tab:alg_comp}. These features include the capability for implementation in parallel computation (parallelizability), how well the method scales to high-dimensional problems (scalability), the ability to handle multi-modal likelihood functions (multi-modal), and the number of hyper-parameters that must be tuned (implementation complexity).

Table~\ref{tab:alg_comp} shows that the standard Monte Carlo approach is (embarrassingly) parallelizable and has only one hyper-parameter, \ie the number of samples, but it is significantly inefficient in higher dimensions and for multi-modal distributions.  
As mentioned previously, traditional nested sampling struggles with efficiency, whereas MultiNest is well-suited to handle multi-modal distributions.   
Traditional nested sampling is more efficient for high-dimensional spaces compared to MultiNest as noted in Example IV, though recent developments in MultiNest have enabled a certain degree of parallelization \cite{Dittmann2024}. 
Both nested sampling and MultiNest require specifying the number of samples (active points) 
and the stopping criterion, which is based on either the total number of iterations or achieving some tolerance related to the change in evidence.  In addition, MultiNest has an additional hyper-parameter for the target efficiency. 
The hyperparameters chosen herein for MultiNest are consistent with those used for nested sampling; the hyperparameters for the three proposed algorithms are chosen to provide a comparable number of function evaluations for a given tolerance. 


Among the three proposed methodologies, LLA-IS lacks the capability for parallelization and is not well-suited to explore high-dimensional space.  LLA-IS also features three hyper-parameters, including the number of initial samples and the standard deviation of the importance density, both of which can be tuned with respect to the available computational power and model complexity. Both the LLA-IS and LLA-SS methods feature a hyper-parameter related to the ``fraction of likelihood level,'' which must be tuned as described in Section~\ref{sec:lik_level}. The advantage of LLA-SS is its fast implementation (thanks to parallelization) and high accuracy for low-dimensional space. The hyper-parameter for the number of substrata for each parameter can also be fixed ($\nstk{k}=5$ is used herein). 
The proposed LLA-MCMC approach can also be applied in parallel; its strength, however, is in its scalability, 
as is powerfully demonstrated in Example IV. Its stopping criterion hyper-parameter can be determined based on Section~\ref{sec:stop}. 
The other three hyper-parameters are selected based on model complexity and the available computational power. 
Like standard Monte Carlo sampling, the LLA-SS algorithm can handle multi-modal likelihood functions, but not as well as MultiNest.
The LLA-IS and LLA-MCMC algorithms can also be used for such likelihood functions, but LLA-IS requires a carefully chosen importance density function $\pdfis(\theta)$, and LLA-MCMC requires further modifications \cite{latuszynski2025mcmc} (that replace the MH algorithm used herein with different MCMC algorithms). 


\begin{table}
    \centering
    \caption{Comparison of the computational capabilities of the algorithms implemented in this study} 
    
    \label{tab:alg_comp}
    \resizebox{\columnwidth}{!}{
    \begin{tabular}{@{}l|c|c|c|l@{}}
        \hline
        \HeaderTstrut\HeaderBstrut \textbf{Algorithm} & \textbf{Parallelizability} & \textbf{Scalability} & \textbf{Multi-Modal} & \textbf{Hyperparameters} \\
        \hline
        \Tstrut\Bstrut Monte Carlo & Perfectly & Inefficiently & Yes & {\footnotesize1. \# of samples} \\ \hline
        \Tstrut\Bstrut Nested Sampling & No & Inefficiently & No & {\footnotesize\begin{tabular}[c]{@{}l@{}}1. \# of samples\\ 2. Stopping criterion \end{tabular}} \\ \hline
        \Tstrut\Bstrut MultiNest & Yes & No & Well-suited & {\footnotesize\begin{tabular}[c]{@{}l@{}}1. \# of samples (active points)\\ 2. Target efficiency\\ 3. Stopping criterion \end{tabular}} \\ \hline
        \Tstrut\Bstrut LLA-IS & No & No & \begin{tabular}[c]{@{}c@{}}Depends on\\ISD choice\end{tabular} & {\footnotesize\begin{tabular}[c]{@{}l@{}}1. \# of initial samples\\ 2. Gaussian ISD std.\\ 3. Fraction of likelihood level\end{tabular}} \\ \hline
        \Tstrut\Bstrut LLA-SS & Yes & No & Yes & {\footnotesize\begin{tabular}[c]{@{}l@{}}1. \# of samples/strata\\ 2. \# of strata\\ 3. Fraction of likelihood level\end{tabular}} \\ \hline
        \Tstrut\Bstrut LLA-MCMC        & Yes               & Efficiently   & \begin{tabular}[c]{@{}c@{}}With \\ modifications\end{tabular} & {\footnotesize\begin{tabular}[c]{@{}l@{}}1. \# of samples\\ 2. \# of samples being replaced\\ 3. Gaussian proposal density std.\\ 4. Stopping criterion\end{tabular}}\\
        \hline
    \end{tabular}
    }
\end{table}

\subsection{Example I: Conceptual example}
Consider a simple multivariate Gaussian likelihood function for a dataset $\D=\{x_k\}_{k=1}^n$ defined by
\begin{equation}
\Le(\mu)=\pdf(\D|~\mu)=\frac{1}{(2\pi)^{n/2}\sigma^{n}}\exp\left(-\frac{1}{2\sigma^{2}}\sum_{k=1}^n[x_k-\mu]^2\right)
\end{equation}
The prior for the parameter $\mu$ is assumed to be a normal distribution with mean and variance given by $\mu_0$ and $\sigma_0^2$, respectively, \ie
\begin{equation}
\pdf(~\mu~|~\mu_0,\sigma_0^2)=\frac{1}{\sqrt{2\pi\sigma_0^2}}\exp\left(-\frac{(~\mu-\mu_0)^2}{2\sigma_0^{2}}\right)
\end{equation}
On the other hand, the standard deviation $\sigma$ of the likelihood function is assumed constant and known.
The evidence or marginal likelihood in this example can be easily computed~\cite{murphy2007conjugate}
\begin{equation}\label{eq:exact}
\begin{split}
\Evd_\mathrm{exact}=\frac{\sigma}{\left(\sqrt{2\pi\sigma^2}\right)^n\sqrt{n\sigma_0^2+\sigma^2}}\exp\left(-\frac{\sum_{k=1}^{n}x_k^2}{2\sigma^2}-\frac{\mu_0^2}{2\sigma_0^2}
+
\frac{2n\mu_0\bar{x}+\frac{\sigma_0^2n^2\bar{x}^2}{\sigma^2}+\frac{\sigma^2\mu_0^2}{\sigma_0^2}}{2(n\sigma_0^2+\sigma^2)}\right)
\end{split}
\end{equation}

With $n=100$ measurements generated from a normal distribution with mean $1.5$ and standard deviation $0.5$, the three proposed algorithms are implemented using $\mu_0=1$ and $\sigma_0=0.25$. The stopping criteria, as discussed in section \ref{sec:stop}, use evidence change threshold $\Delta\Evd_\mathrm{tol}=0.01\%$, lower bound tolerance $\chisub{\mathrm{tol}}=0.005$, maximum iteration count $i\subscript{max}=100$, and 
the number of likelihood evaluations is limited to 20,000. 

The MultiNest algorithm is implemented using $N=500$ samples. 
LLA-IS is implemented with a Gaussian importance density formed at each iteration using the mean of the retained samples and a smaller standard deviation of 0.125 (\ie a smaller spread) with an initial sample size of $N=1000$. The likelihood levels 
are decided at every iteration based on the fraction $f_i^\mathrm{reject}=\min(0.3,0.025i)$.
The LLA-SS algorithm is used with $\Theta^{(s)}=\Theta^{(s_1)}=\left\{F^{-1}_{\mu}\left(y\right)|y \in \Big(\frac{s_1-1}{\nstk{1}},\frac{s_1}{\nstk{1}}\Big]\right\}$, where $\nst=\nstk{1}=5$, and $F_\mu$ is the cumulative probability distribution of the parameter $\mu$. For this algorithm, the likelihood levels are decided at every iteration based on the fraction $f_i^\mathrm{reject}=\min(0.9,0.025i)$.
The LLA-MCMC algorithm is implemented with an initial sample size of $N=1000$ and, at each iteration, the 25 samples with the lowest likelihoods are rejected and 25 new samples with higher likelihoods are added to the sample pool. 

Table~\ref{tab:ex1_evd} shows a comparison of the marginal likelihoods or evidence values obtained using the three proposed algorithms, MultiNest, and the exact value computed using \eqref{eq:exact}. Note that the (natural) log of the evidence is used in reporting the results herein as the evidence can be orders of magnitude larger or smaller depending on the problem. The coefficient of variation (COV) is also used in this example to facilitate comparisons across multiple runs.
The table shows that all three proposed algorithms provide log evidences that are accurate within $0.4\%$ and exhibit very small coefficients of variation. Notably, the proposed LLA-SS outperforms MultiNest in terms of both accuracy and consistency via its lower error and COV, respectively.  
\fref{fig:ml_pa_error} shows how these algorithms reduce error with increasing computational effort. The figure also indicates that the LLA-SS algorithm outperforms the others (thought it will suffer from the curse of dimensionality as the parameter space dimension increases). The figure also shows that the COV of LLA-MCMC is much larger than those of the other algorithms because the samples from Markov chain Monte Carlo methods are generally correlated; this increased variation can be ameliorated by selecting every $n$\textsuperscript{th} sample from the chain, where $n\gg 1$.

\begin{figure}[htb!]
	\centering
	\begin{subfigure}[t]{0.475\textwidth}
		\centering
		\includegraphics[scale=0.25]{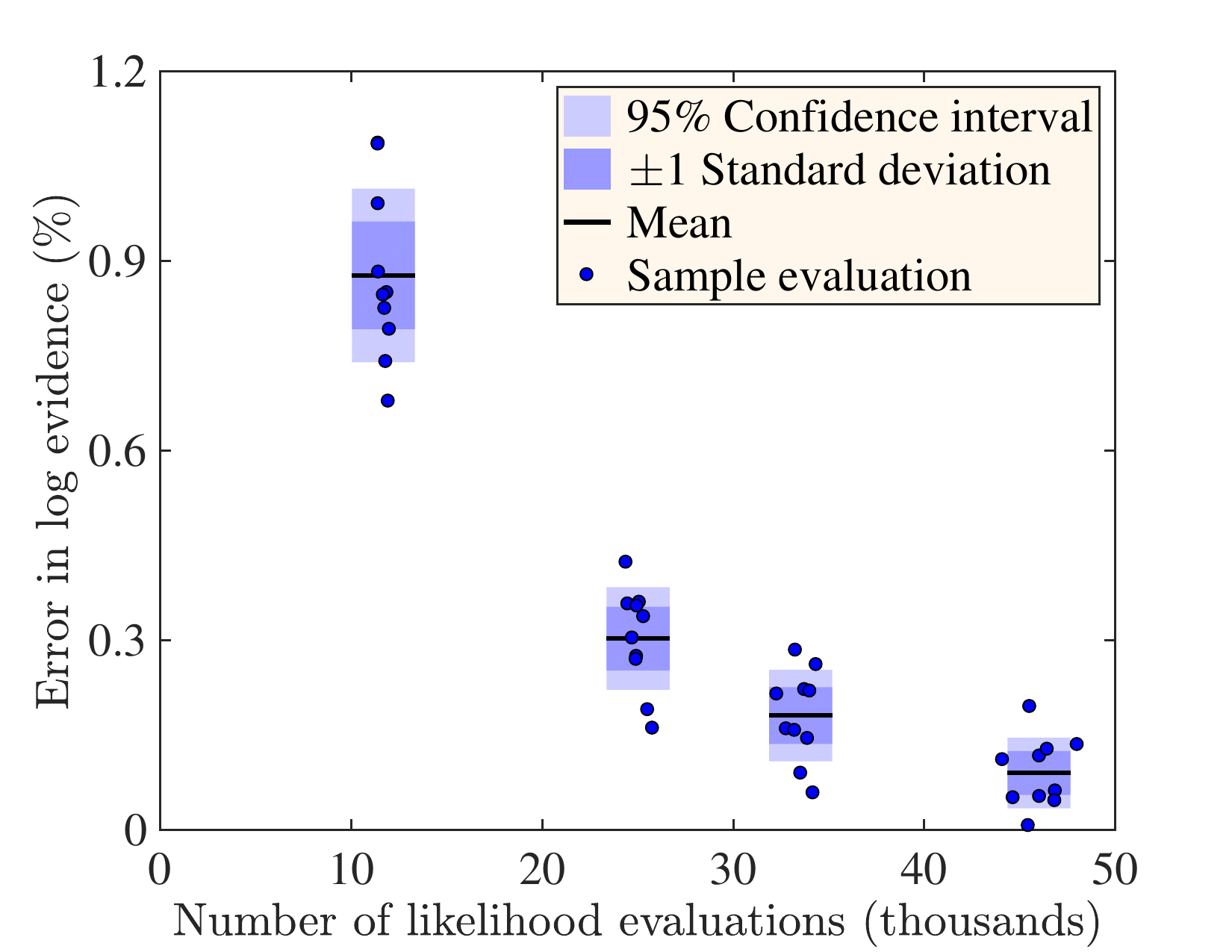}
		\caption{LLA-IS}
	\end{subfigure}
\hfill
	\begin{subfigure}[t]{0.475\textwidth}
		\centering
		\includegraphics[scale=0.25]{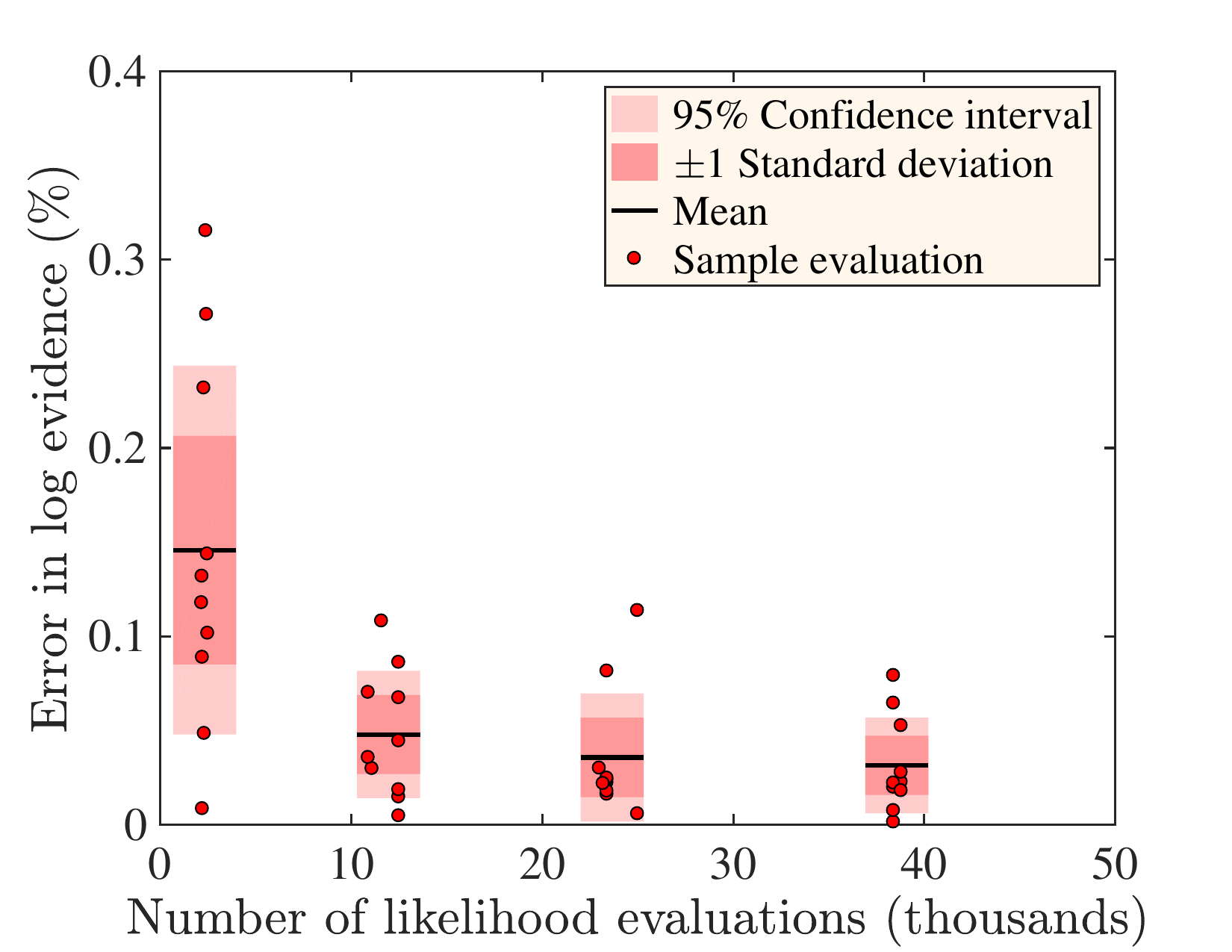}
		\caption{LLA-SS}
	\end{subfigure}
	\begin{subfigure}[t]{0.475\textwidth}
	\centering
	\includegraphics[scale=0.25]{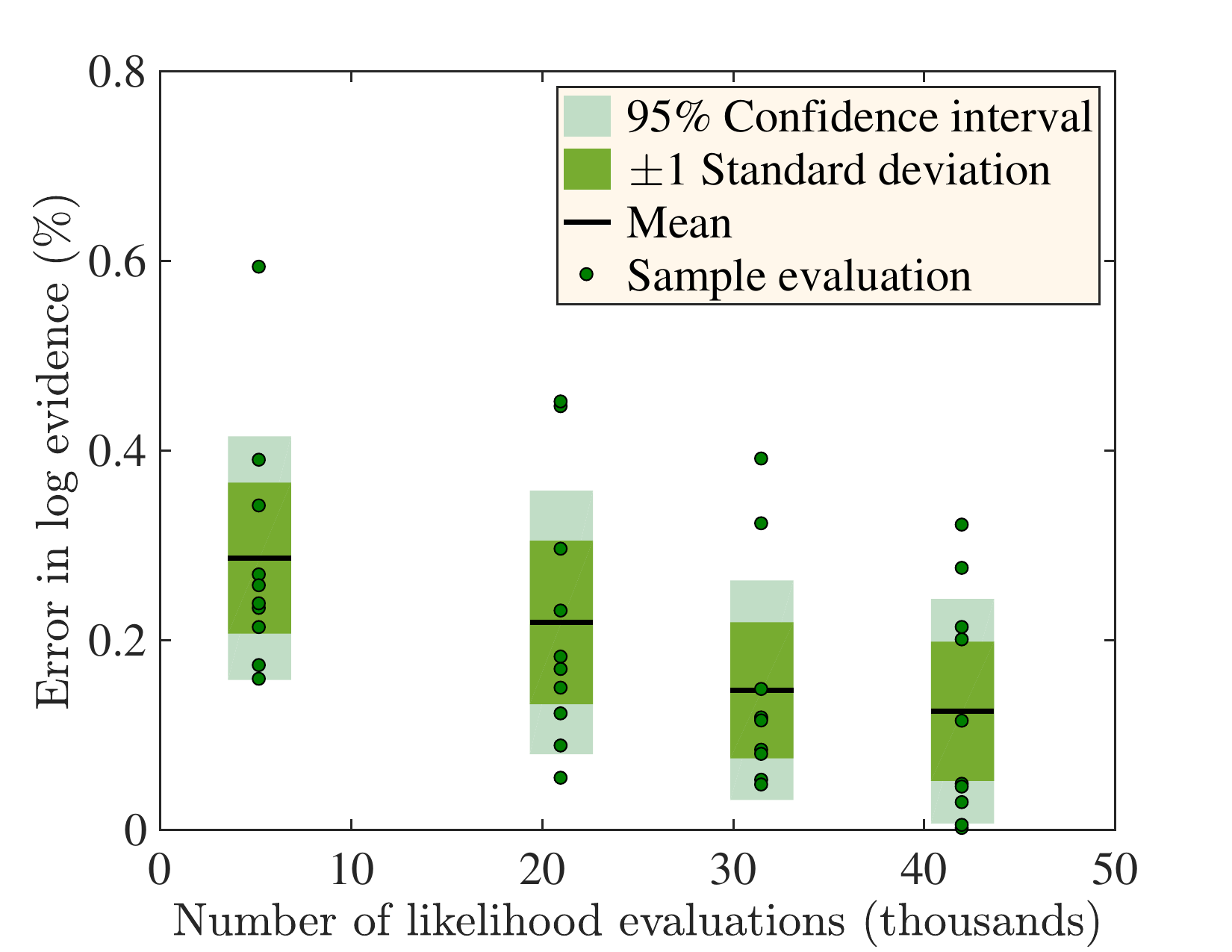}
	\caption{LLA-MCMC}
\end{subfigure}
	\caption{The error in the estimation of log evidence is reduced with increasing numbers of likelihood function evaluations for the three proposed algorithms. Note that the three vertical scales are not the same.} \label{fig:ml_pa_error}\vspace{-5pt}
\end{figure}

\begin{table}[htb!]
	\vspace{0pt}
	\centering
	\caption{Comparison of the first example's marginal likelihoods obtained using the three proposed algorithm along with the exact value. The errors and coefficients of variations (COV) of the log evidence are obtained from 10 independent simulation runs. } \label{tab:ex1_evd}
	\vspace{0pt}
	\begin{threeparttable}
	\begin{tabular}{@{} l | c| c| c|c @{}} 
		\hline
		\HeaderTstrut\HeaderBstrut
		\textbf{Method} & \textbf{\# fcn.~evals}\tnote{$\dagger$} & \textbf{log\textit{\textsubscript{e}}(Evidence)} & \textbf{Error (\%)} & \textbf{COV (\%)} \\
		\hline\Tstrut
		Exact & --- & $-$69.8354 & ---& ---\\
		MultiNest  & $\sim$ 10,000 & $-$69.8578 & 0.0751 & 0.0808 \\
		LLA-IS & $\sim$ 12,000 & $-$70.0468 & 0.3028 &0.0812\\
        {LLA-SS} & $\sim$ 10,000 & $-$69.8274 & 0.0113 & 0.0415\\
		\Bstrut LLA-MCMC & $\sim$ 10,500 & $-$69.7469 & 0.1267 & 0.1188\\
		\hline
	\end{tabular}
		\begin{tablenotes}
            \item[$\dagger$] {\small Since the number of function evaluations slightly varies for multiple runs, an approximate order-of-magnitude average is provided.}
		\end{tablenotes}
	\end{threeparttable}
\end{table}

%

\subsection{Example II: 11 story base isolated building}
\begin{figure}[htb!]
	\centerline{
		\includegraphics[scale=0.6]{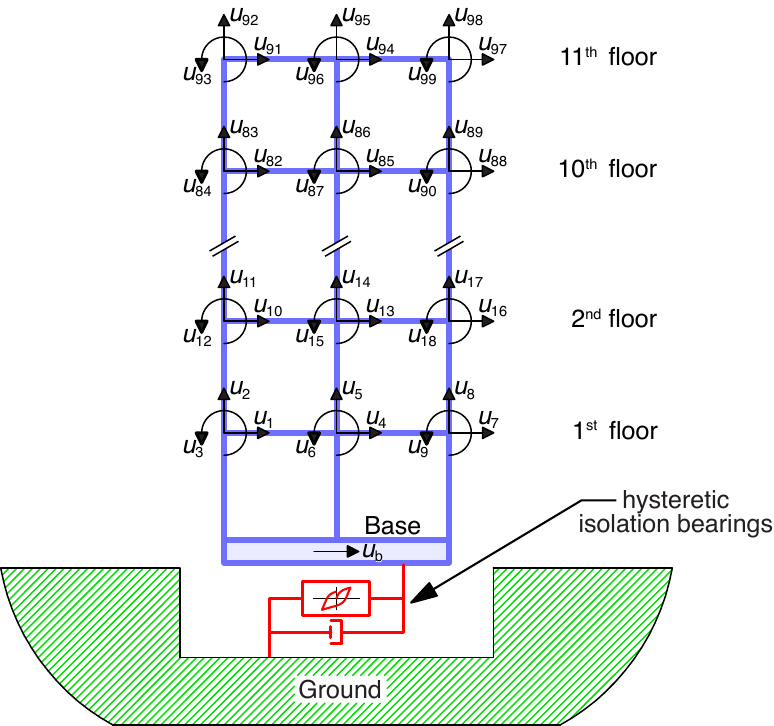}
	} \caption{100 DOF base-isolated structural model} \label{fig:building}\vspace{-5pt}
\end{figure}

This example utilizes a superstructure comprised of an 11-story 2-bay 99 DOF superstructure \cite{kamalzare2014computationally}, sitting on a hysteretic isolation layer that is rigid in-plane and can only move horizontally, resulting in the 100 DOF model shown in \fref{fig:building}.
%
Rayleigh damping, with assumed 3\% damping ratios for the 1\textsuperscript{st} and 10\textsuperscript{th} modes, is used to construct the superstructure damping matrix. 
If considered as a fixed base structure, the superstructure would have a fundamental period of 1.05\units{s} with  
equations of motion written as
\begin{equation}
\Mm\subs \ddot{\uu}\subs+\Cm\subs \dot{\uu}\subs +\Km\subs \uu\subs=-\Mm\subs \rr\subs \ddot u\subg
\end{equation}
where $\Mm\subs$, $\Cm\subs$ and $\Km\subs$ are the mass, damping  and stiffness matrices of the superstructure, respectively; $\ddot u\subg$ is the ground acceleration; $\uu\subs$ is the generalized displacement vector relative to the ground, consisting of 3 DOF per node for each of the 33 nodes in the superstructure. The influence vector of the ground motion $\ddot{u}\subg$, consisting of a 1 in each element corresponding to a horizontal displacement in $\uu\subs$ and zeros elsewhere, is denoted by $\rr\subs$. When combined with the isolation layer, the equations of motion are
\begin{gather}
\Mm\subs \ddot{\uu}\subs+\Cm\subs \dot{\uu}\subs +\Km\subs \uu\subs=-\Mm\subs \rr\subs \ddot u\subg+\Cm\subs\rr\subs \dot{u}\subb+\Km\subs \rr\subs u\subb
\\
m\subb\ddot{u}\subb+(\cb+\rr\subs\transpose\Cm\subs \rr\subs)\dot u\subb  + \rr\subs\transpose\Km\subs \rr\subs u\subb+f\subb=-m\subb\ddot u\subg +\rr\subs\transpose\Cm\subs\dot{\uu}\subs +\rr\subs\transpose\Km\subs \uu\subs
\end{gather}
where $u\subb$ is the isolation layer drift; $m\subb$ is the mass of the base; 
$c\subb$ is the isolation layer linear damping coefficient; $f\subb$ is the isolation layer restoring force, which can be modeled in a number of ways, including a Bouc-Wen model \cite{wen1976method}, a bilinear model and various ``equivalent'' linear models (see \fref{fig:hyst}). The ``true'' isolator model here, used to generate the data $\D$, is a Bouc-Wen model with 
\begin{equation}\label{eq:bouc}
\begin{split}
f\subb &=\kpost \ub + (1 - \rk) \Qy W z\\
\dot z &= A \ubd - \beta  \ubd|z|^{\npow}-\gamma z|\ubd||z|^{\npow-1}\\
\end{split}
\end{equation}
where $\kpre$ and $\kpost$ are the isolator pre-yield and post-yield stiffnesses, respectively; $\Qy$ is the isolator yield force; $(1-\rk)\Qy W$, the peak of the non-elastic force, depends on the hardness ratio $\rk = \kpost/\kpre$ and the building weight $W=(m\subscript{s}+m\subb)g$, where $m\subscript{s}$ is the total mass of the superstructure; $z$ is an evolutionary variable; and the Bouc-Wen hysteresis shaping constants are chosen $A=2\beta=2\gamma=\kpre/(\Qy W)$ so that $z$ remains in $[-1,1]$ and produces in $f\subb$ identical loading and unloading stiffnesses.
The isolation-layer linear damping coefficient herein is fixed at $\cb=40\units{kN$\cdot$s/m}$.


\begin{table}[htb!]
	\vspace{0pt}
	\centering
	\caption{Exact and prior distribution of parameters for the 11-story base isolated building. } \label{tab:ex3_prior}
	\vspace{0pt}
	\begin{threeparttable}
		
		\begin{tabular}{@{} l | c | c | l| l|l|l @{}} 
			\hline
			\HeaderTstrut
			\multirow{2}{*}{\textbf{Parameter}} & \textbf{True} & \textbf{Prior} & \textbf{Lower }& \textbf{Upper} & \multirow{2}{*}{\textbf{Mean}} & \multirow{2}{*}{\textbf{Std.~dev.}}  \\
			\HeaderBstrut & \textbf{value} & \textbf{Distribution} & \textbf{bound} & \textbf{bound} & & \\
			\hline\Tstrut
			$\kpost$ (kN/m) & 750\hphantom{.0000} & Trunc.~Gaussian& 700\hphantom{.00} & 800\hphantom{.00} & 780\hphantom{.00} & 20\hphantom{.0000} \\
			$\Qy$ (\%) & \hphantom{00}5\hphantom{.0000} & Uniform & \hphantom{00}4.5\hphantom{0} & \hphantom{00}6.5\hphantom{0} & \hphantom{00}5.5\hphantom{0} & \hphantom{0}0.5774\\ 
			\Bstrut	$\rk$ & \hphantom{00}0.1667 & Uniform & \hphantom{00}0.16 &  \hphantom{00}0.20 & \hphantom{00}0.18 & \hphantom{0}0.0115\\
			\hline
		\end{tabular}
	\end{threeparttable}
\end{table}


The uncertain parameters $\thetaa=\{\kpost,\Qy,\rk\}\transpose$ are the post-yield stiffness, the isolator yield force as a percentage off building weight, and the hardening ratio.  The assumed priors for these uncertain parameters are given in Table~\ref{tab:ex3_prior}, along with their ``true values'' \cite{skinner1993} --- which are used to generate the measurement data --- that 
give the isolated structure first mode a typical large-displacement isolation period of $T_1^\mathrm{i} =2.76\units{s}$, and a linear viscous damping ratio of 5.5\% (not including any energy dissipation from the isolator hysteresis).
The ground excitation is the 1940 El Centro earthquake record (peak ground acceleration 3.43 m/s) of 30\units{s} duration with a sampling rate of 20 Hz recorded at Imperial Valley Irrigation District substation in El Centro, California.
The absolute (horizontal) acceleration of the roof, specified by DOF $u_{97}$ in \fref{fig:building}, with a sampling rate of 20\units{Hz} is used as the model output, to which is added a zero-mean Gaussian measurement-noise pulse process with a standard deviation that is 20\% of the standard deviation of the exact response.

The MultiNest algorithm is applied using 500 samples. The LLA-IS algorithm is implemented with a Gaussian importance density formed at each iteration using the mean of the current samples and twice their standard deviations, $\gamma_\mathrm{S}=0.5N$, and an initial sample size of $N=100$. The likelihood levels are decided at every iteration based on the fraction $f_i^\mathrm{reject}=\min(0.5,0.075+0.025i)$. The LLA-SS algorithm is implemented with $\Theta_k^{(s_k)}=\left\{F^{-1}_{\theta_k}\left(\theta_k\right)|\theta_k \in \Big(\frac{s_k-1}{\nstk{k}},\frac{s_k}{\nstk{k}}\Big]\right\}$, where $\nstk{k}=5$ so $\nst=5^3=125$, and $F^{-1}_{\theta_k}$ is the inverse probability distribution of $\theta_k$, $k=1,\dots,3$. For this algorithm, the likelihood levels are decided at every iteration based on the fraction $f_i^\mathrm{reject}=0.25$. 
The LLA-MCMC algorithm is implemented with an initial sample size of 100 and, at each iteration, the 10 samples with the lowest likelihoods
are rejected and 10 new samples with higher likelihoods are added to the sample pool using the Metropolis-Hastings algorithm. The proposal density in the Metropolis-Hastings algorithm is assumed Gaussian with standard deviations of 5\units{kN/m}, 0.5\%, and 0.005 for the parameters $\kpost$, $\Qy$, and $\rk$, respectively.
The stopping criteria are used as in Example I except that the total number of function evaluations is limited to 7000 and $\Delta \Evd_\mathrm{tol}=0.1\%$ to limit computational cost. 

Since the exact marginal likelihood is not available for this example, the difference relative to a Monte Carlo estimation is provided for the proposed methods, nested sampling, and MultiNest in Table~\ref{tab:ex3_evd}. 
The relative difference in the marginal likelihood or evidence is calculated using
\begin{equation}
\text{Relative difference }=\Bigg\lvert\frac{\log \widehat\Evd\superscript{LLA}-\log \widehat\Evd\superscript{MC}}{\log \widehat\Evd\superscript{MC}}\Bigg\rvert
\end{equation}
where $\widehat\Evd\superscript{MC}$ is estimated with Monte Carlo sampling using many likelihood evaluations. Figure~\ref{fig:evdMC_LLA_ExII} is used to determine the sufficient number of likelihood evaluations needed to produce a stable estimate, which is then compared to evidence estimates from other methods.  (In this and subsequent examples, the number of function evaluations are reported as an order-of-magnitude to be consistent with Example I, since they can vary slightly depending on the random seed provided.)
\begin{figure}[htb!]
	\centerline{
		\includegraphics[scale=0.53]{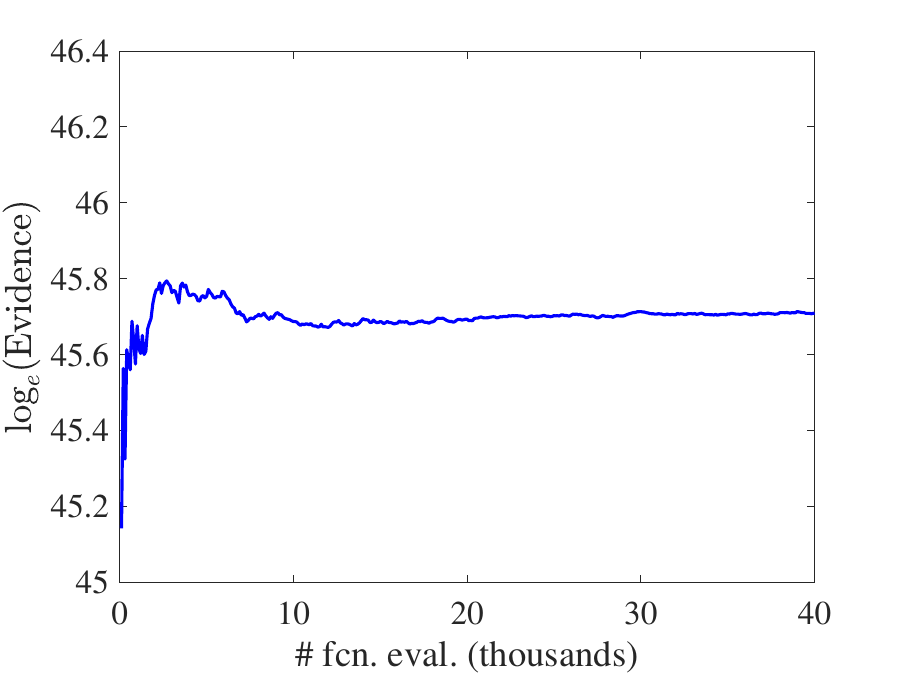}
	} \caption{Estimation of evidence using Monte Carlo approach with increasing number of likelihood evaluations.
    } \label{fig:evdMC_LLA_ExII}\vspace{-5pt}
\end{figure}
Among the three algorithms, the performance of LLA-IS is affected by the assumption of importance density at each iteration since the importance density is not efficient in drawing samples from the high-likelihood region. 
and performs poorly compared to all other algorithms. This shows that Gaussian importance densities might not be appropriate for all problems and a judicious choice of $\pdfis(\thetaa)$ is required depending on the problem. While the MultiNest algorithm performs well, with a relative error of 0.28\%, the LLA-SS performs better because of the low-dimensional parameter space and gives a relative difference of only 0.14\% in log evidence from the Monte Carlo estimate that uses about 8 times as many likelihood evaluations. Similarly, the LLA-MCMC algorithm performs better than MultiNest, producing a relative difference of 0.11\% in log evidence compared to a Monte Carlo sampling requiring about nine times as many likelihood evaluations.
Notably, both nested sampling and MultiNest require significantly more (30--45\% more) function evaluations than, but without achieving accuracy comparable to, the proposed LLA-SS and LLA-MCMC methods.
The estimated means and standard deviations of the parameters using the LLA-MCMC algorithm are shown in Table~\ref{tab:ex3_param}. 

\begin{table}[htb!]
	\vspace{0pt}
	\centering
	\caption{Comparison of evidence estimates using different algorithms for the second example, compared to conventional Monte Carlo and nested sampling.
    } \label{tab:ex3_evd}
	\vspace{0pt}
	\begin{tabular}{@{} l | c| c | c @{}} 
		\hline
		\HeaderTstrut\HeaderBstrut
		\textbf{Method} & \textbf{\# fcn.~eval.} & \textbf{log\textit{\textsubscript{e}}(Evidence)} & \textbf{Relative difference (\%)} \\
		\hline\Tstrut
		Monte Carlo & \hphantom{$\sim$}40,000 & 45.6407 & --- \\
		Nested sampling & $\sim$ 8,000 & 45.1889 & 0.99\\
		MultiNest & $\sim$\hphantom{0}8,000 &  45.7081 & 0.15 \\
		LLA-IS & $\sim$\hphantom{0}6,500 & 43.4989 & 4.69\\
		LLA-SS & $\sim$\hphantom{0}4,900 &  45.7038 & 0.14 \\
		\Bstrut LLA-MCMC & $\sim$\hphantom{0}4,400 & 45.5899 & 0.11 \\
        \hline
	\end{tabular}
\end{table}

\begin{table}[htb!]
	\vspace{0pt}
	\centering
	\caption{Posterior means and standard deviations of the parameters for the 11-story base isolated building, computed using the LLA-MCMC algorithm.} \label{tab:ex3_param}
	\vspace{0pt}
	
	\begin{tabular}{@{} l | c | c | c @{}} 
		\hline
        \HeaderTstrut
		\multirow{2}{*}{\textbf{Parameter}} & \textbf{True} & \textbf{Posterior} & \textbf{Posterior}  \\
		\HeaderBstrut & \textbf{value} & \textbf{Mean} & \textbf{Std.~dev.} \\
		\hline\Tstrut
		$\kpost$ (kN/m)  & 750\hphantom{.0000} & 766.84\hphantom{00} & 17.12\hphantom{00} \\
		$\Qy$ (\%)  & \hphantom{00}5\hphantom{.0000} & \hphantom{00}5.04\hphantom{00}& \hphantom{0}0.25\hphantom{00} \\ 
		\Bstrut $\rk$   &  \hphantom{00}0.1667 & \hphantom{00}0.1677 & \hphantom{0}0.0036\\
		\hline
	\end{tabular}
\end{table}

Next, a Bayesian model selection exercise is performed for this example to determine the restoring force model that best reproduces the measured data. The candidate models are Bouc-Wen, bilinear, and a linear approximation of the bilinear model according to the AASHTO 
guidelines~\cite{subcommittee2010guide}. 
\begin{figure}[htb!]
	\begin{center}
		\includegraphics[scale=0.3]{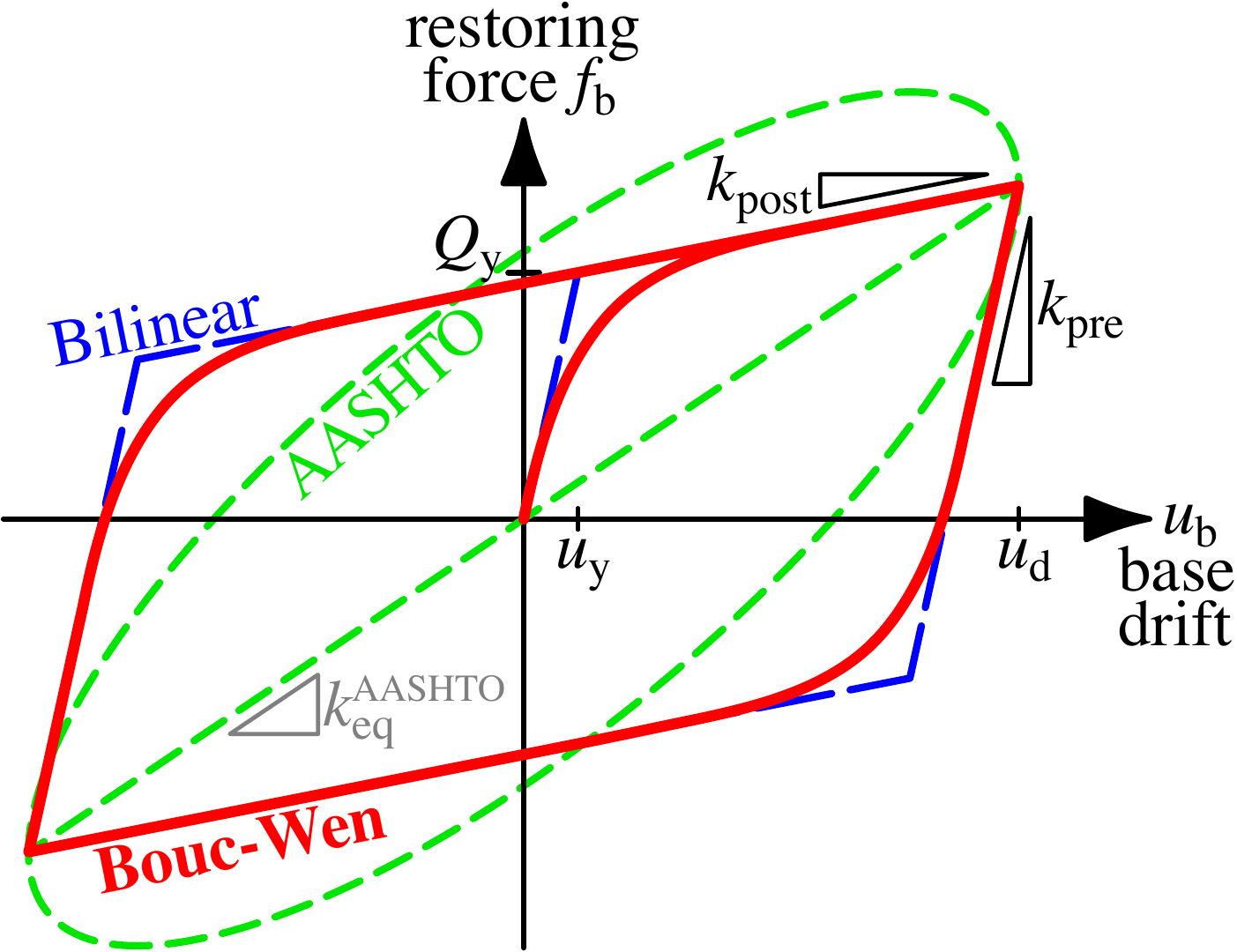}
	\end{center}
	\caption{Models for isolator hysteresis. 
    } \label{fig:hyst}
\end{figure}
The bilinear model can be approximated by a Bouc-Wen model \eqref{eq:bouc}
as $\npow\rightarrow\infty$; this study uses $\npow=100$, which is sufficiently large to produce a bilinear hysteresis loop. The AASHTO model approximates the energy dissipation in each cycle of the bilinear model, approximating the isolator restoring force with
\begin{equation}
\begin{split}
f\subscript{b}&=c\subscript{eq}\dot{u}\subb+\keq\ub\\
&=2\zetaeq\sqrt{\keq (m\subscript{b}\mbox{}+m\subscript{s})}\dot{u}\subb+\keq\ub.\\
\end{split}
\end{equation}
The equivalent damping ratio $\zetaeq$ and equivalent stiffness $\keq$ are given by
\begin{equation}
\begin{split}
&\zetaeq=\frac{2(1-\rk)(1-\rd^{-1})}{\pi[1+\rk(\rd-1)]}\\
&\keq=\frac{\kpre}{\rd}{[1+\rk(\rd-1)]}
\end{split}
\end{equation}
where $\rd=u\subscript{d}/u\subscript{y}$ is the shear ductility ratio of the design displacement $u\subscript{d}$ to the yield displacement $u\subscript{y}$.

The marginal likelihoods are calculated using the LLA-MCMC algorithm. With equal prior model probabilities, the posterior model probabilities are calculated using Bayes' theorem \eqref{eq:mod_sel_defn} and shown in Table~\ref{tab:ex3post}. The model selection correctly assigns a posterior probability of 1.0 to Bouc-Wen model. Also note that, if the Bouc-Wen model is absent from the candidate model set, the Bayesian model selection chooses the second-best bilinear model as the correct one. This is expected because the bilinear model is the only remaining nonlinear model.
\begin{table}[htb!]
	\vspace{0pt}
	\centering
	\caption{Posterior model probabilities for the hysteretic isolation layer in the 11-DOF base-isolated building, estimated using the LLA-MCMC algorithm.} \label{tab:ex3post}
	\vspace{0pt}
	\begin{tabular}{@{} l | c | c @{}} 
		\hline
		\HeaderTstrut\HeaderBstrut
		\textbf{Model} & \textbf{log\textit{\textsubscript{e}}(Evidence)} & \textbf{$\Prob{\Mk|\D}$} \\
		\hline\Tstrut
		Bouc-Wen & \hphantom{$-$0}45.6893 & $\approx$ 1.0\\
		Bilinear & \hphantom{0}$-$46.1149 & $\approx$ 0.0 \\
		\Bstrut		AASHTO &  $-$821.0503 & $\approx$ 0.0 \\
		\hline
	\end{tabular}
\end{table} 

\subsection{Example III: Flow past a cylinder}
In this example, 2D fluid flow in a channel 
past a slightly off-center cylinder 
is considered, where the fluid inflow velocities are uncertain. Simulations are solved using the finite element method implemented in the FEniCS software package \cite{langtangen2016solving}, adapted from one of the examples in its documentation. \fref{fig:flow} shows the cross-section of the channel of width $h = 0.41\units{m}$
and the cylinder of diameter $0.10\units{m}$. 
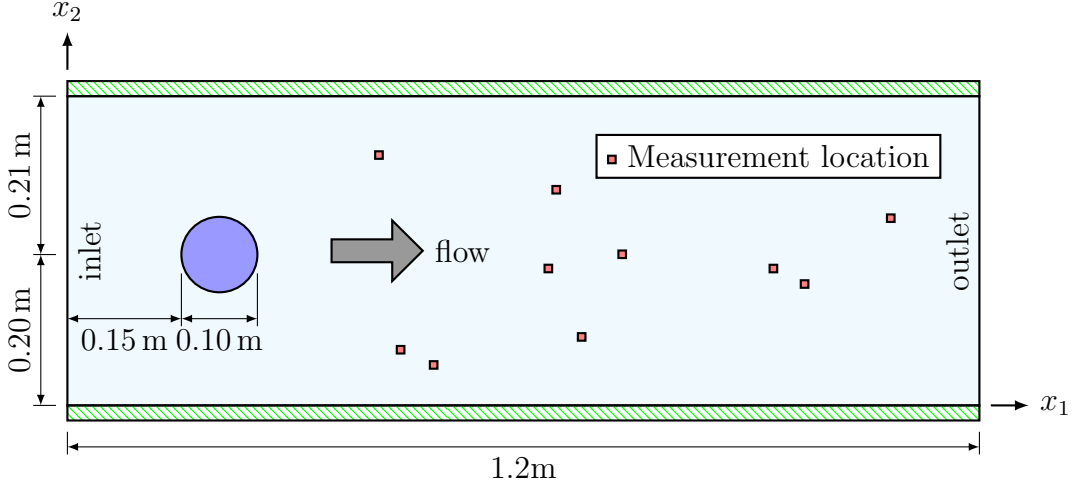
\begin{figure}[htb!]
	\centering
	\begin{tikzpicture}
	
	\def\wallthickness{0.2}
	\def\dimstart{0.15}
	\def\dimend{0.45}
	\def\dimloc{0.35}
	\def\axislen{0.5}
	\begin{scope}[local bounding box=channel]
		\draw[thick,fill=cyan!5] (0,0) rectangle (12,4.1);
	\end{scope}
	\begin{scope}[local bounding box=cylinder]
		\draw[fill=blue!40,thick] (2,2) circle (0.5);
	\end{scope}
	\draw[thick,pattern=north west lines, pattern color=green] (channel.south west) + (0,-\wallthickness) rectangle (channel.south east);
	\draw[thick,pattern=north west lines, pattern color=green] (channel.north west) + (0,\wallthickness) rectangle (channel.north east);

	\draw[-latex,thick] ($ (channel.north west) + (0,\wallthickness+\dimstart) $) -- ++(0,\axislen) node[anchor=south] {$x_2$};
	\draw[-latex,thick] ($ (channel.south east) + (\dimstart,0) $) -- ++(\axislen,0) node[anchor=west] {$x_1$};

	\draw (channel.south west)+(-\dimstart,0) -- ++(-\dimend,0);
	\draw (channel.north west)+(-\dimstart,0) -- ++(-\dimend,0);
	\draw (channel.south west |- cylinder.center)+(-\dimstart,0) -- ++(-\dimend,0);
	\coordinate (ycylinder) at ($ (cylinder.center -| channel.south west) + (-\dimloc,0) $);
	\draw[latex-latex] ($ (channel.south west) - (\dimloc,0) $) -- node[above,rotate=90] {0.20\units{m}} (ycylinder);
	\draw[latex-latex] (ycylinder) -- node[above,rotate=90] {0.21\units{m}} ($ (channel.north west) - (\dimloc,0) $);

	\draw ($ (channel.south west) - (0,\wallthickness) $)+(0,-\dimstart) -- ++(0,-\dimend);
	\draw ($ (channel.south east) - (0,\wallthickness) $)+(0,-\dimstart) -- ++(0,-\dimend);
	\draw[latex-latex] ($ (channel.south west) - (0,\wallthickness+\dimloc) $) -- node[below,rotate=0] {1.2m} ($ (channel.south east) - (0,\wallthickness+\dimloc) $);
	
	\coordinate (ydimcylinder) at ($ (cylinder.south west) + (0,-\dimend) $);
	\draw ($ (cylinder.south west) + 0.25*(cylinder.north) - 0.25*(cylinder.south) $) -- (ydimcylinder);
	\draw ($ (cylinder.south east) + 0.25*(cylinder.north) - 0.25*(cylinder.south) $) -- ($ (cylinder.south east) + (0,-\dimend) $);
	\coordinate (ydimcylinder2) at ($ (ydimcylinder) + (0,\dimend) - (0,\dimloc) $);
	\draw[latex-latex] (channel.west|-ydimcylinder2) -- node[below,rotate=0] {0.15\units{m}} (ydimcylinder2);
	\draw[latex-latex] (ydimcylinder2) -- node[below,rotate=0] {0.10\units{m}} (cylinder.east|-ydimcylinder2);
	
	\node[rotate=90,anchor=north] at (channel.west) {inlet};
	\node[rotate=90,anchor=south] at (channel.east) {outlet};
	
	\node[thick,single arrow,draw=black,fill=black!40,minimum height = 1.2 cm] (arrow) at ($ (channel.west) + (4,0) $) {};
	\node[anchor=west] at (arrow.east) {flow};

	\draw[thick, fill = red!50] (10.8835,2.5321) rectangle (10.7835,2.4321);
	\draw[thick, fill = red!50] (4.8691,0.5865) rectangle (4.7691,0.4865);
	\draw[thick, fill = red!50] (6.8166,0.9579) rectangle (6.7166,0.8579);
	\draw[thick, fill = red!50] (4.4342,0.7880) rectangle (4.3342,0.6880);
	\draw[thick, fill = red!50] (9.3395,1.8647) rectangle (9.2395,1.7647);
	\draw[thick, fill = red!50] (6.4819,2.9094) rectangle (6.3819,2.8094);
	\draw[thick, fill = red!50] (7.3498,2.0538) rectangle (7.2498,1.9538);
	\draw[thick, fill = red!50] (9.7498,1.6593) rectangle (9.6498,1.5593);
	\draw[thick, fill = red!50] (6.3782,1.8652) rectangle (6.2782,1.7652);
	\draw[thick, fill = red!50] (4.1494,3.3694) rectangle (4.0494,3.2694);
	
	\def\egendoffset{0.5}
	\node[draw,rectangle,thick,fill=white,anchor=north east] (legend) at ($ (channel.north east) - (\egendoffset,\egendoffset) $) {%
		$\vcenter{\hbox{%
		\begin{tikzpicture}
		\draw[thick, fill = red!50,inner sep=0,outer sep=0] (0,0) rectangle (0.1,0.1);
		\end{tikzpicture}%
		}}$%
	~Measurement location};
	\end{tikzpicture}
	\caption{ The setup for Example III where flow, with a parabolic velocity profile, enters the domain from left side. The measurement locations are shown using small squares.} \label{fig:flow}
\end{figure}

The Navier-Stokes equation for an incompressible fluid, along with the mass conservation equation, are
\begin{equation}\label{eq:nst}
\begin{split}
\frac{\partial \uu}{\partial t}+(\uu \cdot \nabla)\uu & = -\frac{1}{\rho}\nabla p + \nu \nabla^2\uu\\
\nabla \cdot \uu & = 0\\
\end{split}
\end{equation}
where $\uu(\x;t)=[u_1(\x;t),u_2(\x;t)]\transpose$ is the velocity vector at position $\x=[x_1,x_2]\transpose$; $p$ is the pressure; $\rho$ is the fluid density; $\nu$ is the kinematic viscosity of the fluid; and the gradient operator $\nabla=[\partial/\partial\x]^\mathrm{T}=\left[{\partial}/{\partial x_1}, {\partial}/{\partial x_2}\right]\transpose$. Equations \eqref{eq:nst} are solved using the incremental pressure correction scheme in FEniCS.
The inlet velocity profile is assumed parabolic
\begin{equation}
\uu([0,x_2]\transpose;t,\theta)= \left[\frac{4x_2(h-x_2)}{h^2}\theta, 0\right]\transpose 
\end{equation}
where $\theta$ is the uncertain peak inlet velocity.
The velocity measurements are generated with $\theta_\text{true}=1.2\units{m/s}$, which results in a flow with a Reynolds number of 80. The horizontal velocities are sampled at ten points $\x_1,\dots,\x_{10}$ randomly chosen downstream from the cylinder, starting at 1\,s after the start of the flow and stopping before 1.6\units{s} with an interval of 0.05\units{s}, resulting in a total of $n=120$ velocities $u_{1,l,k}^\text{true}=u_1(\x_l;t_k,\theta_\text{true})$, where $l=1,\dots,10$, $k=1,\dots,12$, and $t_k=0.95\units{s}+(0.05\units{s})k$. These velocities are stacked into a vector $\dd_\text{exact}$ and then corrupted by additive zero-mean Gaussian measurement noise $\textbf{v}$:
    \begin{equation}\label{eq:stackedvelocities}
    \dd = \dd_\text{exact} + \mathbf{v}, \qquad
    \dd_\text{exact}=[u_{1,1,1}^\text{true},\dots,u_{1,1,12}^\text{true},~~\dots~~,u_{1,10,1}^\text{true},\dots,u_{1,10,12}^\text{true}]\transpose,
    \end{equation}
where the elements of $\mathbf{v}$ are independent and identically distributed zero-mean Gaussian variables, each with a standard deviation that is 20\% of the standard deviation of the 120 measurements --- \ie 20\% of 
$\sqrt{\mathrm{Var}(\dd_{\mathrm{exact}})} = \{\tfrac{1}{n-1}[\sum_{k=1}^{n}d_{\mathrm{exact},k}^2] - \tfrac{1}{n(n-1)}[\sum_{k=1}^{n}d_{\mathrm{exact},k}]^2\}^{1/2}$, where $d_{\mathrm{exact},k}$ is the $k$\textsuperscript{th} element of $\dd_{\mathrm{exact}}$.
\begin{figure}[htb!]
	\centering
	\begin{subfigure}[t]{0.95\textwidth}
		\centering
		\includegraphics[scale=0.4]{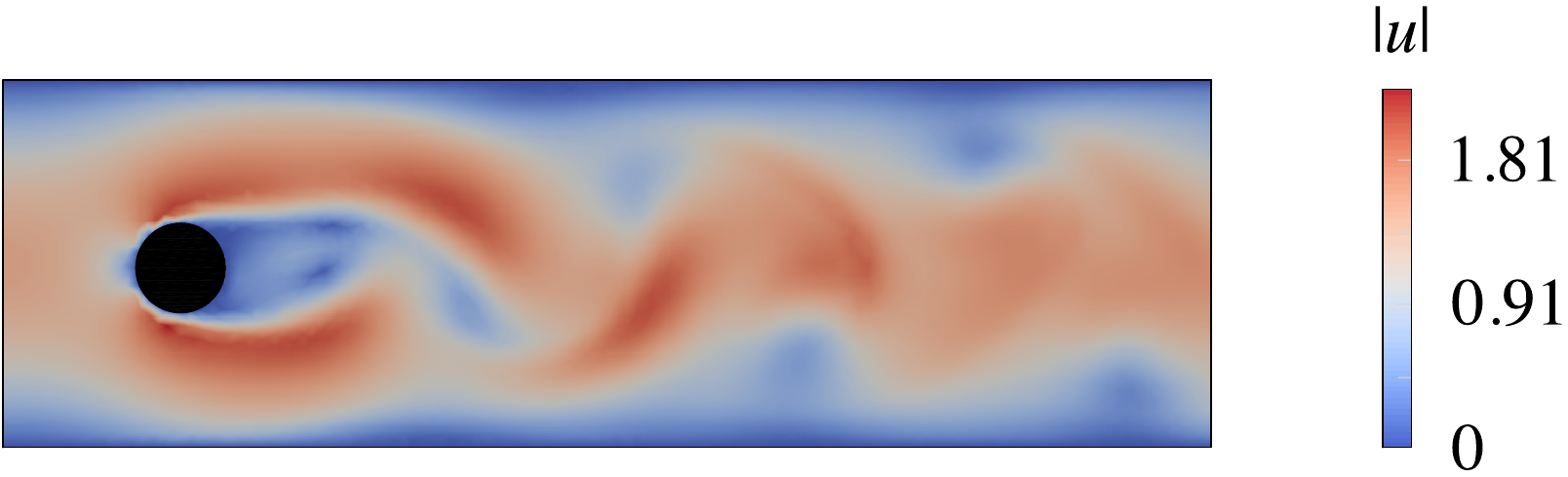}
		\caption{Horizontal velocity magnitude distribution (m/s).}
	\end{subfigure}%

	\begin{subfigure}[t]{0.95\textwidth}
		\centering
		\includegraphics[scale=0.4]{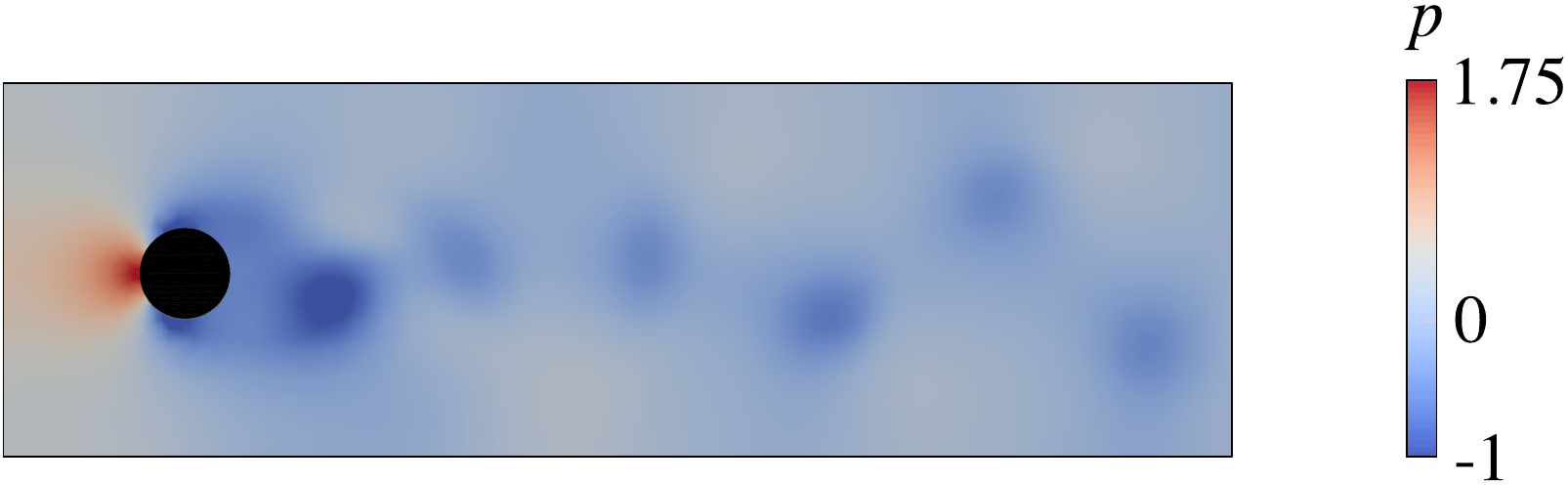}
		\caption{Pressure distribution (N/m$^2$).}
	\end{subfigure}
	\caption{Typical velocity and pressure distributions in Example III at $t=1.5\units{s}$.}
\end{figure}

For prediction purposes, the measured velocity is assumed to contain both an additive measurement noise and a multiplicative modeling error. 
Herein, the multiplicative error is used as in Cheung \textit{et al.}~\cite{cheung2011bayesian}, Oliver and Moser~\cite{oliver2011bayesian}, and Edeling \textit{et al.}~\cite{edeling2014bayesian,edeling2014predictive}. In this uncertainty model, the measured velocity is assumed to be given by 
\begin{equation}
u_1^\mathrm{meas}(\x;t,\theta)=E_\mathrm{m}(\x)u_1(\x;t,\theta)+e
\end{equation}
where $u_1(\x;t,\theta)$ is the velocity obtained by solving \eqref{eq:nst}; the multiplicative modeling error given by $E_\mathrm{m}(\x)$ is used to generate the modeling uncertainty; and $e$ is the measurement error. 
The required covariances are then written as
\begin{equation}\label{eq:cov2}
\begin{split}
k_{{uu}}(\x,\x^\prime;t|\theta)&=u_1(\x;t,\theta)k_{\!_{E_\mathrm{m}}}(\x,\x^\prime;t|\theta)u_1(\x^\prime;t,\theta)\\
\end{split}
\end{equation}
where the measurement error is assumed independent. 
The multiplicative error $E_\mathrm{m}$ is assumed herein to be a time-invariant, spatial Gaussian process with unit mean and covariance function given by
\begin{equation}\label{eq:cov3}
k_{\!_{E_\mathrm{m}}}(\x,\x^{\prime};t|\thetaa_\mathrm{hyp})=\sigma^2\exp\left[-\sum_{i=1}^2\left(\frac{x_i-x_i^\prime}{l_i}\right)^2\right]
\end{equation}
where the hyperparameters are $\thetaa_\mathrm{hyp}=[\sigma, l_1, l_2]\transpose$. 
	The measurement errors in $\dd$ are assumed distributed as $\mathcal{N}(\mathbf{0},\Sigm_e)$, where $\Sigm_e=\sigma_e \mathbf{I}$ is a diagonal matrix.
A Gaussian likelihood function is constructed, next, as 
\begin{equation}
\pdf(\D|\theta)=\frac{1}{(2\pi)^{n/2}|\Sigm|^{1/2}}\exp\left\{-\frac{1}{2}\left[\uu_1(\theta)-\dd\right]\transpose\Sigm^{-1}\left[\uu_1(\theta)-\dd\right]\right\}
\end{equation}
where $\uu_1(\theta)=[u_{1,1,1}^\theta,\dots,u_{1,1,12}^\theta,~~\dots~~,u_{1,10,1}^\theta,\dots,u_{1,10,12}^\theta]\transpose$ 
consists of the $n=120$ velocities predicted using parameter $\theta$ corresponding to each element in $\dd$ where $u_{1,l,k}^\theta \equiv u_1(\x_l;t_k,\theta)$; $\Sigm=\Sigm_e+\Sigm_{E_\mathrm{m}}$ is the covariance matrix of the likelihood, where $\Sigm_{E_m}$ is formed using the covariance of the Gaussian process $E_\mathrm{m}$ in \eqref{eq:cov3}, \ie where the $(p+12[j-1],q+12[k-1])$ element of $\Sigm_{E_m}$ is 
\begin{equation}
\left.
\sigma^2u_1(\x_j;t_p)u_1(\x_k;t_q)
\middle/
\prod_{i=1}^2\exp\left(\frac{x_{j,i}-x_{k,i}}{l_i}\right)^2
\right.
\end{equation}
for $j,k=1,\dots,10$ and $p,q=1,\dots,12$. The hyperparameters are assumed here as $\sigma=0.2\sqrt{\mathrm{Var}(\dd)}$, and $l_1=l_2=1\units{m}$ to model the correlated and corrupted measurement data. 
The prior for peak inlet velocity $\theta$ is assumed to be a truncated Gaussian with mean 1.25\units{m/s} and standard deviation 0.5\units{m/s}, truncated at 1.0 and 1.5\units{m/s}.

The LLA-IS algorithm is implemented with a truncated Gaussian importance density formed at each iteration, using the mean of the current samples and a standard deviation of 0.25\units{m/s}, truncated at 1.0 and 1.5\units{m/s}, and with $\gamma_\mathrm{S}=0.5N$ and an initial sample size of $N=25$. The likelihood levels are decided at iteration $i$ based on the fraction $f_i^\mathrm{reject}=\min(0.5,0.075+0.025i)$. The LLA-SS algorithm is implemented again with $\Theta^{(s_1)}=\left\{F^{-1}_{\theta}\left(\theta\right)|\theta \in \Big(\frac{s_1-1}{\nstk{1}},\frac{s_1}{\nstk{1}}\Big]\right\}$, where $\nst=\nstk{1}=5$, and $F_{\theta}$ is the probability distribution of the peak inlet velocity $\theta$; for this algorithm, the likelihood levels are decided at every iteration based on the fraction $f_i^\mathrm{reject}=0.25$.
The LLA-MCMC algorithm is implemented with an initial sample size of 100 and, at each iteration, the 20 samples with the lowest likelihoods are rejected. 20 new samples with likelihoods larger than the highest likelihood value of the rejected samples are added to the sample pool using the Metropolis-Hastings algorithm with a proposal density that is a truncated Gaussian with a standard deviation 0.1\units{m/s} and truncated at 1\units{m/s} and 1.5\units{m/s}. The stopping criteria are the same as in Example I except $\Delta\Evd_\mathrm{tol}=1\%$ and a maximum of 5000 likelihood evaluations.

The accuracies of the proposed algorithms, again compared to Monte Carlo and nested sampling, are shown in Table~\ref{tab:ex2_evd}. 
The sufficient number of likelihood function evaluations for the Monte Carlo approach is estimated similar to Example II. 
All three algorithms give small relative difference in the log evidence here as shown in Table~\ref{tab:ex2_evd}.
The LLA-SS performs well, as the dimension of the parameter space is low. The LLA-IS and LLA-MCMC algorithms in this example also produce small relative differences compared to a Monte Carlo sampling with 5.7 times as many likelihood evaluations. When compared to nested sampling with similar numbers of likelihood evaluations, the proposed algorithms --- notably LLA-MCMC --- show improved accuracy of almost two orders of magnitude. 

\begin{table}[htb!]
	\vspace{0pt}
	\centering
	\caption{Comparison of evidence estimates, using different algorithms for the third example, compared to conventional Monte Carlo and nested sampling.
    } \label{tab:ex2_evd}
	\vspace{0pt}
	\begin{tabular}{@{} l | c| c| c @{}} 
		\hline
		\HeaderTstrut\HeaderBstrut
		\textbf{Method} & \textbf{\# fcn.~eval} & \textbf{log\textit{\textsubscript{e}}(Evidence)} & \textbf{Relative difference (\%)} \\
		\hline\Tstrut
		Monte Carlo & \hphantom{$\sim$}20,000 & 252.5201& --- \\
		Nested sampling & $\sim$\hphantom{0}4,000& 252.2854 & 0.0929 \\
		LLA-IS & $\sim$\hphantom{0}3,500 &252.5498 & 0.0118\\
		LLA-SS & $\sim$\hphantom{0}3,500 & 252.5067  & 0.0053\\
		\Bstrut LLA-MCMC & $\sim$\hphantom{0}3,500 & 252.5249 & 0.0019 \\

        
        

        
        
        
        
        
        
        
        
		\hline
	\end{tabular}
\end{table}


\subsection{Example IV: Steady state heat conduction within an inhomogeneous plate}
This example explores the LLA-MCMC algorithm on a high-dimensional problem, for which it is more suited than the other two algorithms that may suffer significantly from the curse of dimensionality, as discussed in Section~\ref{sec:method}%
, because finding a good LLA-IS importance density in high dimensions is difficult and, as implemented, the number of strata in LLA-SS grows exponentially with dimension.
Although a good rejection algorithm like MultiNest is efficient in low dimensions, an exponential scaling emerges \cite{Handley2015} because the parameter space volume expands exponentially with the number of dimensions, and the number of samples must be increased to obtain accurate estimates \cite{Feroz2013}.

A square thin plate of dimension 2\,m$\times$2\,m is composed of an inhomogeneous material with a coordinate system origin at its center. One edge (at $y=-1\units{m}$) is subjected to a fixed temperature of 100$^\circ$C; the other three edges are thermally insulated.
A point source of $Q=25\units{W}$ 
is located in the interior at $x=-0.5\units{m}$, $y=0$. The material thermal conductivity $k(x_1,x_2)=k_0 \exp[z(x_1,x_2)]$ 
is assumed uncertain, where $z(x_1,x_2)$ is an underlying unitless zero-mean Gaussian field and, herein, $k_0 = 1\units{W/(m\textsuperscript{2}$\cdot$K)}$. To keep the problem simple, the covariance kernel of the latent field is assumed as
\begin{equation}\label{eq:correl}
\Exp[z(x_1,x_2)z(y_1,y_2)]=\sigma^2\exp\left(\sum_{k=1}^2-\frac{|x_k-y_k|}{l_k}\right)
\end{equation}
where $l_1$ and $l_2$ are the correlation lengths; $\sigma=1.5$ is used herein. 
Next, the latent random field $z(x_1,x_2)$ is expressed using a Karhunen-Lo\'eve expansion truncated at the $M$\textsuperscript{th} term:
\begin{equation}\label{eq:kl}
z(x_1,x_2)=\sum_{k=1}^M  \omega_k \xi_k f_k(x_1,x_2)
\end{equation}
where the $\omega_k^2$ 
and $f_k(x_1,x_2)$ are the eigenvalues and eigenfunctions, respectively, of the covariance kernel; the $\xi_k$ are standard unit normal variables; $M=100$ is selected to capture 99.6\% of the total energy. Hence, the dimension of the uncertain parameter vector is 100 in this example. 
To generate the temperature measurements in the steady-state condition, correlation lengths $l_1=l_2=0.75\units{m}$ are used; 25 random points are selected on the plate, at which temperatures are measured and corrupted by 10\% Gaussian additive sensor noise (\ie the noises at the sensors are independent and identically distributed zero-mean Gaussian random variables with a standard deviation that is 10\% of the standard deviation of the noise-free measurements. 
A typical temperature distribution is shown in \fref{fig:temp_dist}. 

Using a Gaussian likelihood and (incorrectly) guessing the correlation lengths both to be 1.0\units{m}, the marginal likelihood is estimated with an initial sample size of 100 using the LLA-MCMC algorithm. The stopping criterion is solely that the change in evidence be smaller than 0.1\%. The result is compared in Table~\ref{tab:ex4_evd} to Monte Carlo sampling with $6\times10^5$ likelihood evaluations and to a nested sampling algorithm that is run to the same level of $\widehatchisub{\mathrm{final}}^\mathrm{MCMC}$ as LLA-MCMC achieves. The sufficient number of likelihood function evaluations for the Monte Carlo approach is estimated similar to Example II. LLA-MCMC again provides better accuracy compared to nested sampling using only two-thirds as many likelihood evaluations.
\begin{figure}[htb!]
	\begin{center}
	\begin{tikzpicture}
	\node[inner sep=0pt] (pic) at (0,0)
	    {\includegraphics[scale=0.5]{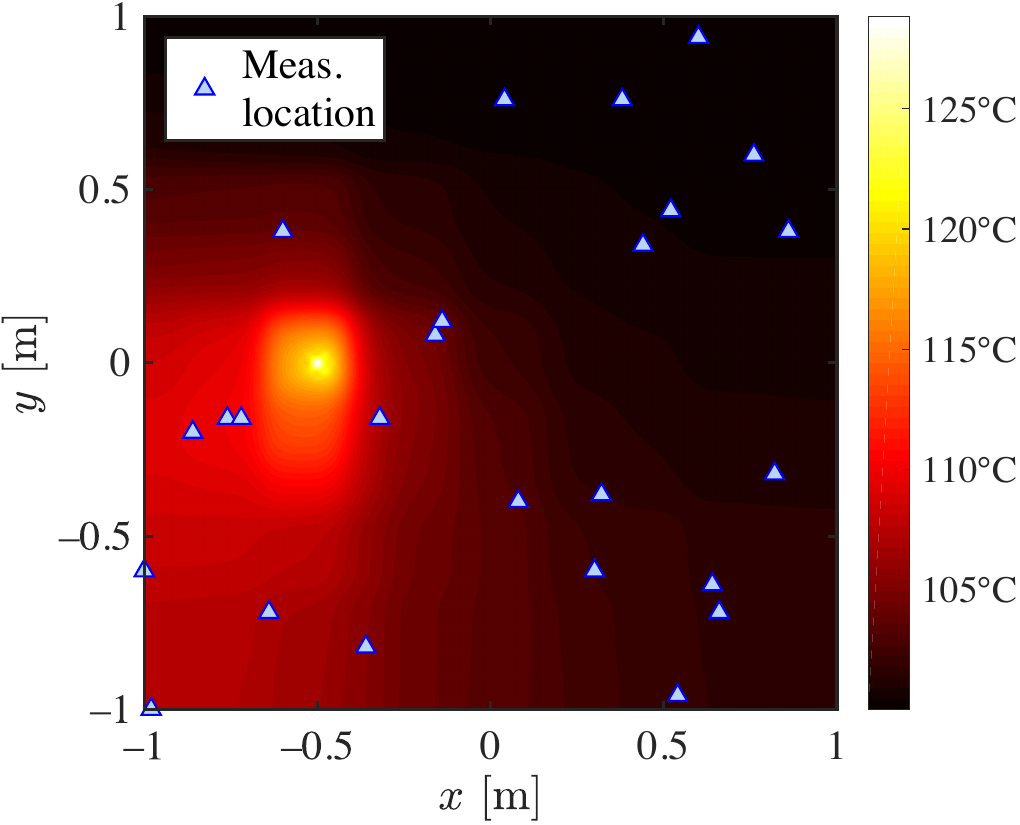}};
	\draw[pattern=north east lines, pattern color=black, draw = none,thick] (-3.1,3.25) rectangle (2.8,3.75);
	\node[draw=none] at (0,4)   () {Fixed $T=100^\circ$ C};
	\end{tikzpicture}
	\end{center}
	\caption{Typical temperature distribution with 25 random measurement locations shown.} \label{fig:temp_dist}
\end{figure}
\begin{table}[htb!]
	\vspace{0pt}
	\centering
	\caption{Comparison of evidence estimate using LLA-MCMC algorithm for Example IV, compared to conventional Monte Carlo and nested sampling.
    } \label{tab:ex4_evd}
	\vspace{0pt}
	\begin{tabular}{@{} l | c| c | c @{}} 
		\hline
		\HeaderTstrut\HeaderBstrut
		\textbf{Method} & \textbf{\# fcn.~eval.} & \textbf{log\textit{\textsubscript{e}}(Evidence)} & \textbf{Relative difference (\%)} \\
		\hline\Tstrut
		Monte Carlo & \hphantom{$\sim$} 60$\times$10\textsuperscript{4} & $-$30.2160 & --- \\
		Nested sampling & $\sim$ 12$\times$10\textsuperscript{4} & $-$31.1172 & 2.98\\
		\Bstrut LLA-MCMC & $\sim$ \hphantom{0}8$\times$10\textsuperscript{4} & $-$30.1295 & 0.29 \\
		\hline
	\end{tabular}
\end{table}

A model selection problem is also implemented for this example, where each model is a polynomial chaos expansion (PCE) \cite{ghanem1991stochastic} of different orders. 
(To allow for easier modeling using lower-order PCE --- as the intent is to evaluate the methods proposed herein, not PCE --- the number of terms in the Karhunen-Lo\'eve expansion for model selection is reduced to $M=22$, which captures 92\% of the total energy.) 
Bayesian model selection is performed to evaluate the suitability of second-order, third-order, and sparse fourth-order PCEs with Hermite polynomials. For the second- and third-order PCEs, an ordinary least-squares minimization is used to estimate the coefficients. A least angle regression is used to estimate the coefficients of the sparse fourth-order PCE \cite{efron2004least}. 
The result, reported in Table~\ref{tab:ex4_result}, shows that a third-order PCE is more likely than a second-order PCE in this example. However, a fourth-order sparse PCE models the data much more poorly compared to either second- or third-order expansions.




\begin{table}[htb!]
	\vspace{0pt}
	\centering
	\caption{Posterior model probabilities for different orders of polynomial chaos expansion.}  \label{tab:ex4_result}
	\vspace{0pt}
	\begin{tabular}{@{} l | c | c @{}} 
		\hline
		\HeaderTstrut\HeaderBstrut
		\textbf{PCE order} & \textbf{log\textit{\textsubscript{e}}(Evidence)} & \textbf{$\Prob{\Mk|\D}$} \\
		\hline\Tstrut
		second & $-36.1309$ & $0.31$\\
		third & $-35.3233$ & $0.69$\\
		\Bstrut		fourth (Sparse) &  $-51.6090$ & $0.00$ \\
		\hline
	\end{tabular}
\end{table}

\section{Conclusions}\label{sec:conc}
A likelihood level adapted approach has been proposed herein to evaluate the multidimensional integral known as the marginal likelihood, or evidence, that is required in Bayesian model selection. Three algorithms using importance sampling, stratified sampling, and Markov chain Monte Carlo are used to implement this approach, and are compared with Monte Carlo sampling and two nested sampling algorithms (original and MultiNest). 
In the first proposed algorithm (LLA-IS), importance sampling densities are formed adaptively at each iteration of the algorithm to generate samples with increased levels of likelihood; the performance of this algorithm depends, however, on the effectiveness of the assumed importance sampling density and, as a result, LLA-IS can become inefficient in sampling from high-likelihood regions.
The second algorithm (LLA-SS), which uses stratified sampling and focuses on generating more samples from the strata with higher likelihoods, is the most accurate method for models with complicated behavior in low dimension, and can be implemented in parallel for each stratum; 
however, this algorithm can suffer from the curse of dimensionality for problems with high-dimensional parameter spaces. The final algorithm (LLA-MCMC) uses Markov chain Monte Carlo algorithms to sample more from the high likelihood region; for high-dimensional parameter spaces, MCMC algorithms with high acceptance rates, \eg the modified Metropolis-Hastings algorithm \cite{au2001estimation}, can be used to outperform the other approaches. 

The performance and accuracy of these proposed algorithms are compared in four numerical examples.
The first example shows the accuracy of the proposed algorithms for a problem with a known marginal likelihood.
The second example uses an 11-story base-isolated building model, with uncertainties in the nonlinear hysteretic isolation layer, subjected to a historical earthquake; the uncertain isolation-layer parameters provide a multidimensional parameter space.
The third example evaluates the marginal likelihood (evidence) for a model where the underlying physics requires many solution of the Navier-Stokes equations; the proposed algorithms are shown to give results that are within the desired accuracy level relative to those estimated using the well-known nested sampling algorithm.
The fourth example involves heat conduction in a plate with uncertain thermal conductivity, which is modeled as a random field with 100 stochastic dimensions; the LLA-MCMC algorithm applied to this problem shows the superior accuracy of the proposed algorithm when multi-dimensional uncertainty is present.
These examples show the efficacy of the proposed algorithms in different settings and their application to Bayesian model selection. The LLA-IS algorithm, however, needs a better selection strategy to construct the importance sampling densities. Similarly, the application of sparse quadrature to the LLA-SS algorithm needs to be investigated. 
In future work, the proposed algorithms will also be implemented for multi-physics problems, and their efficient parallel implementation will be explored. 





\def\appendixname{Appendix}
\appendix
%
%

	\section{The stochastic error $\errorcomponent_{\mathrm{s}}$}\label{sec:app1}
	The stochastic error is defined as
	\begin{equation}
	\sum_{i=0}^{i_{\mathrm{final}}}\lambda_i\left[\Delta\chi(\lambda_i)-\Delta\widehat{\chii}\right] = \sum_{i=0}^{i_{\mathrm{final}}}\lambda_i \errorcomponent_{\mathrm{s},i}
	\end{equation}
	where each component of the error is $\errorcomponent_{\mathrm{s},i}=\Delta\chi(\lambda_i)-\Delta\widehat\chii$.
\begin{prop}
	For each $i$, the stochastic error component, 
    $\Delta\widehat{\chii}$ converges to $\Delta\chi(\lambda_i)$, and its coefficient of variation (COV) is $\mathcal{O}(N^{-1/2})$ when the samples are uncorrelated and $N$, the number of samples, is large.
\end{prop}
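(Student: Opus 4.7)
The plan is to establish the proposition by treating each of the three algorithms (LLA-IS, LLA-SS, LLA-MCMC) separately, showing in each case that $\widehat{\chi}_i$ is (asymptotically) unbiased for $\chi(\lambda_i)$ with variance of order $1/N$, and then combining the results for $\widehat{\chi}_{i-1}$ and $\widehat{\chi}_i$ to control $\Delta\widehat{\chi}_i$. The convergence $\Delta\widehat{\chi}_i \to \Delta\chi(\lambda_i)$ will follow from the strong law of large numbers applied to the corresponding estimators, and the $\mathcal{O}(N^{-1/2})$ COV will follow from the Central Limit Theorem once the variance is shown to scale as $1/N$.

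First, for LLA-IS I would use the fact that $\widehat{\chi}_i^{\mathrm{IS}}$ is a sample mean of the independent random variables $Y_{j,i} = \mathbb{I}_{\widetilde{\Theta}_i}(\theta_{j,i})w_{j,i}$ with $\theta_{j,i}\sim \pdfis_{i-1}$. A direct calculation gives $\Exp[Y_{j,i}] = \int \mathbb{I}_{\widetilde{\Theta}_i}(\theta)\pdf(\theta)\mathrm{d}\theta = \chi(\lambda_i)$, so the estimator is unbiased, and $\mathrm{Var}[\widehat{\chi}_i^{\mathrm{IS}}] = \mathrm{Var}[Y_{j,i}]/N = \mathcal{O}(1/N)$ provided the importance-weight second moment is finite (\ie $\pdfis_{i-1}$ dominates $\pdf$ on $\widetilde{\Theta}_i$, which holds by construction). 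Similarly, for LLA-SS the per-stratum estimator $\widehat{\chi}_i^{(s)}$ in \eqref{eq:SS_chi_s_estimate} is an empirical probability and hence unbiased with variance $\chi_i^{(s)}(1-\chi_i^{(s)})/N_{i,\mathrm{cum}}^{(s)}$; summing as $\widehat{\chi}_i^{\mathrm{SS}} = \sum_{s\in\mathcal{I}_{i-1}} p^{(s)} \widehat{\chi}_i^{(s)}$ preserves unbiasedness and yields the variance formula already quoted in the text, which is again $\mathcal{O}(1/N)$. For LLA-MCMC, I would use the recursive decomposition in \eqref{eq:MCMC_nested}: conditional on the prior iteration, the ratio estimator is an empirical probability based on $N$ (assumed uncorrelated) draws from a Markov chain with stationary distribution $\pdf_{i-1}$, so it has mean $\Prob{\theta\in\widetilde{\Theta}_i \mid \theta\in\widetilde{\Theta}_{i-1}}$ and variance $\mathcal{O}(1/N)$; induction on $i$ then propagates the $\mathcal{O}(1/N)$ variance and near-unbiasedness upward.

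Next, for the difference $\Delta\widehat{\chi}_i = \widehat{\chi}_{i-1} - \widehat{\chi}_i$, I would write
\begin{equation}
\mathrm{Var}[\Delta\widehat{\chi}_i] = \mathrm{Var}[\widehat{\chi}_{i-1}] + \mathrm{Var}[\widehat{\chi}_i] - 2\,\mathrm{Cov}[\widehat{\chi}_{i-1},\widehat{\chi}_i],
\end{equation}
and bound each term by $\mathcal{O}(1/N)$ (Cauchy--Schwarz for the covariance). Since $\Exp[\Delta\widehat{\chi}_i] = \chi(\lambda_{i-1}) - \chi(\lambda_i) = \Delta\chi(\lambda_i)$ (or asymptotically so for LLA-MCMC), the COV is $\mathrm{Var}[\Delta\widehat{\chi}_i]^{1/2}/|\Delta\chi(\lambda_i)| = \mathcal{O}(N^{-1/2})$, provided $\Delta\chi(\lambda_i)>0$ (which is guaranteed by the construction of the likelihood levels in Section~\ref{sec:lik_level}). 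Convergence $\Delta\widehat{\chi}_i \to \Delta\chi(\lambda_i)$ then follows from the strong law of large numbers applied to each component estimator, and a CLT argument gives the asymptotic $\mathcal{O}(N^{-1/2})$ rate.

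The main technical obstacle is the LLA-MCMC case: samples at level $i$ are generated by running Markov chains whose initial states are the retained samples from level $i-1$, so without the assumption of uncorrelated samples, one must appeal to a Markov-chain CLT with mixing conditions on the transition kernel $K(\theta,\mathrm{d}\eta)$. Under the proposition's stated assumption of uncorrelated samples (achievable in practice by thinning, as mentioned in Section~\ref{sec:ex}), this obstacle dissolves and the ordinary i.i.d.\ CLT suffices. A secondary subtlety is that $\widehat{\chi}_{i-1}$ and $\widehat{\chi}_i$ share no samples in LLA-IS and LLA-MCMC (drawn from different importance/stationary distributions) but do share samples in LLA-SS through the cumulative accumulation $N_{i,\mathrm{cum}}^{(s)}$; this correlation is handled cleanly by the standard stratified-sampling variance decomposition already appearing in the paper.
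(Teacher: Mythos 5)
Your proof is correct in substance but takes a genuinely different route from the paper's. The paper's argument hinges on a single observation: it defines the ``shell'' indicator $\mathbb{I}_{i-1,i}(\thetaa_j)=\mathbb{I}_{\widetilde\Thetaa_{i-1}}(\thetaa_j)\bigl[1-\mathbb{I}_{\widetilde\Thetaa_i}(\thetaa_j)\bigr]$ and writes $\Delta\widehat{\chii}=\frac{1}{N}\sum_{j=1}^{N}\mathbb{I}_{i-1,i}(\thetaa_j)$ directly as a sample mean of Bernoulli variables with success probability $\Delta\chi(\lambda_i)$, so that the uncorrelatedness assumption immediately yields the \emph{exact} variance $\Delta\chi(\lambda_i)[1-\Delta\chi(\lambda_i)]/N$ and the closed-form $\mathrm{COV}=\sqrt{(1-\Delta\chi(\lambda_i))/(N\Delta\chi(\lambda_i))}$. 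You instead treat $\widehat\chisub{i-1}$ and $\widehatchisub{i}$ as separate estimators (unbiased, each with variance $\mathcal{O}(1/N)$, verified per algorithm) and bound $\mathrm{Var}[\Delta\widehat{\chii}]$ via the difference identity with Cauchy--Schwarz on the covariance. What each buys: the paper's route is shorter and gives the exact constant, but it tacitly assumes both level estimates are formed from one common, unweighted sample set --- essentially the plain Monte Carlo idealization, which does not literally match LLA-IS (different importance densities and weights at successive levels) or LLA-SS (cumulative, stratum-dependent sample counts). Your route sacrifices the exact constant for an order bound, but is honest about the algorithm-specific estimators and correctly flags the two real technical caveats (Markov-chain correlation in LLA-MCMC, shared samples in LLA-SS). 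One small point you share with the paper rather than improve on: the claim that $\widehat\chisub{i}^{\mathrm{SS}}$ is unbiased ignores the possible bias from restricting the sum to $\mathcal{I}_{i-1}$, since a stratum excluded at level $i-1$ could in principle still contain mass with $\Le(\thetaa)>\lambda_i$; acknowledging this as an additional (small) bias term would make your LLA-SS case airtight.
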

\begin{proof}
	For a large number $N$ of samples, using the strong law of large numbers, $\Delta\widehat{\chii}$ converges to $\Delta\chi(\lambda_i)$ almost surely \cite{feller2008introductionv1} as the mean of a sequence of random variables.
    Hence, $\errorcomponent_{\mathrm{s},i}=\lambda_i[\Delta\chi(\lambda_i)-\Delta\widehat{\chii}]$ is zero-mean for large $N$.
	Further, denote $\mathbb{I}_{i,k}(\thetaa_j)=\mathbb{I}_{\widetilde\Thetaa_i}(\thetaa_j)\left[1-\mathbb{I}_{\widetilde\Thetaa_k}(\thetaa_j)\right]$, \ie an indicator that sample $\thetaa_j$ is in $\widetilde\Thetaa_i$ but outside $\widetilde\Thetaa_k$. Hence,
	\begin{equation}
	\Delta\widehat{\chii}
    =\frac{1}{N}\sum_{j=1}^{N} \mathbb{I}_{i-1,i}(\thetaa_j)
	\end{equation}
	and
	\begin{equation}\label{eq:var_chi}
	\begin{split}
	\Exp\big[(\Delta\widehat\chii-\Delta\chi(\lambda_i))^2\big]&=\Exp\left[\left(\frac{1}{N}\sum_{j=1}^{N} \big[\mathbb{I}_{i-1,i}(\thetaa_j)-\Delta\chi(\lambda_i)\big]\right)^2\right]\\
	&=\frac{1}{N^2}\sum_{j=1}^{N}\sum_{l=1}^{N}\Exp\Big[\big( \left[\mathbb{I}_{i-1,i}(\thetaa_j)-\Delta\chi(\lambda_i)\right]\big)\big( \left[\mathbb{I}_{i-1,i}(\thetaa_l)-\Delta\chi(\lambda_i)\right]\big)\Big]
	\end{split}
	\end{equation}
	Note that the $\mathbb{I}_{i-1,i}(\thetaa_j)$ are Bernoulli random variables with success probabilities $\Delta\chi(\lambda_i)$.
	Assume the samples are uncorrelated, \ie 
	\renewcommand{\qedsymbol}{~}
	\begin{equation}\label{eq:uncor}
	\begin{split}
	\Exp\Big[\big( \left[\mathbb{I}_{i-1,i}(\thetaa_j)-\Delta\chi(\lambda_i)\right]\big)\big( \left[\mathbb{I}_{i-1,i}(\thetaa_l)-\Delta\chi(\lambda_i)\right]\big)\Big]& =0 \qquad \qquad \text{for }j\neq l\\
	\Exp\Big[\big( \left[\mathbb{I}_{i-1,i}(\thetaa_j)-\Delta\chi(\lambda_i)\right]\big)^2\Big]&= \mathrm{Var}\big(\mathbb{I}_{i-1,i}(\thetaa_j)\big)\\
	&=\Delta\chi(\lambda_i)\left[1-\Delta\chi(\lambda_i)\right] \\
	\end{split}
	\end{equation}
	Using these relations,
	\begin{equation}
	\begin{split}
	\mathrm{Var}(\Delta\widehat\chii)&=\frac{\Delta\chi(\lambda_i)\left[1-\Delta\chi(\lambda_i)\right]}{N}\\
	\textrm{COV}(\Delta\widehat\chii)&=\sqrt{\frac{1-\Delta\chi(\lambda_i)}{N\Delta\chi(\lambda_i)}}
	\end{split}
	\end{equation}
\end{proof}


    
    
    
    


\bibliographystyle{elsarticle-num}

\bibliography{Farzad_database_usc}

\end{document}
